\newtheorem{theorem}{Theorem}[section]
\newtheorem{lemma}[theorem]{Lemma}
\newtheorem{example}[theorem]{Example}
\newtheorem{proposition}[theorem]{Proposition}
\newcommand{\minusre}{\hspace{0.3em}\raisebox{0.3ex}{\sl \tiny /}\hspace{0.3em}}
\newcommand{\minusli}{\hspace{0.3em}\raisebox{0.3ex}{\sl \tiny $\setminus $}\hspace{0.3em}}
\newcommand{\lex}{\,\overrightarrow{\times}\,}
\newcommand{\Ker}{\mbox{\rm Ker}}
\newcommand{\Infinit}{\mbox{\rm Infinit}}
\def\ld{\mathord{\backslash}}
\def\rd{\mathord{/}}
\newcommand{\RDP}{\mbox{\rm RDP}}
\newcommand{\RIP}{\mbox{\rm RIP}}
\begin{document}
\title[Kite Pseudo Effect Algebras]{Kite Pseudo Effect Algebras}
\author[Anatolij Dvure\v{c}enskij]{Anatolij Dvure\v censkij$^{1,2}$}
\date{}%
\maketitle
\begin{center}  \footnote{Keywords: Pseudo MV-algebra, pseudo effect algebra $\ell$-group, po-group, strong unit, Riesz Decomposition Property, lexicographic product, kite pseudo effect algebra, normal ideal, perfect pseudo effect algebra.

 AMS classification: 03G12 81P15,  03B50

The paper has been supported by  Slovak Research and Development Agency under the contract APVV-0178-11, the grant VEGA No. 2/0059/12 SAV, and by
CZ.1.07/2.3.00/20.0051.
 }
Mathematical Institute,  Slovak Academy of Sciences,\\
\v Stef\'anikova 49, SK-814 73 Bratislava, Slovakia\\
$^2$ Depart. Algebra  Geom.,  Palack\'{y} University\\
17. listopadu 12, CZ-771 46 Olomouc, Czech Republic\\

E-mail: {\tt dvurecen@mat.savba.sk}
\end{center}

\begin{abstract}
We define a new class of pseudo effect algebras, called kite pseudo effect algebras, which is connected with partially ordered groups not necessarily with strong unit. In such a case, starting even with an Abelian po-group, we can obtain a noncommutative pseudo effect algebra. We show how such kite pseudo effect algebras are tied with different types of the Riesz Decomposition Properties. Kites are so-called perfect pseudo effect algebras, and we define conditions when kite pseudo effect algebras have the least non-trivial normal ideal.
\end{abstract}

\section{Introduction}

Effect algebras were introduced in \cite{FoBe} in order to describe events appearing in quantum mechanical measurements. They are partial algebras with a primary notion $+$,  which means that $a+b$ denotes the disjunction of mutually excluding events $a$ and $b.$ The main idea was to describe algebraically an appropriate model for the so-called POV-measures (positive operator valued measures) in the effect algebra $\mathcal E(H)$ of Hermitian operators between the zero and the identity operators of a real, complex or quaternionic Hilbert space $H.$ In the last two decades, they became an important class of so-called quantum structures which generalize Boolean algebras, orthomodular poset and orthomodular lattices, and orthoalgebras.

An important class of effect algebras cames from Abelian partially ordered groups (= po-groups) with strong unit $u$  as an interval $[0,u]$ in the positive cone. This is possible, for example,  whenever an effect algebra satisfies the Riesz Decomposition Property (RDP for short), see \cite{Rav}, which roughly speaking means possibility of a joint refinement of any two decompositions. Also every MV-algebra can serve as an example of effect algebras, \cite{Cha}, which can be characterized as a lattice ordered effect algebra with RDP. For more information about effect algebras, we recommend to see \cite{DvPu}. We note that MV-algebras describe many-valued reasoning. In addition, every lattice ordered effect algebra can be covered by a system of MV-algebras, \cite{Rie}.

Later the assumption that addition $+$ is commutative was canceled in \cite{DvVe1, DvVe2}, and pseudo effect algebras were introduced as a non-commutative generalization of effect algebras. This class also contains all pseudo MV-algebras which are a non-commutative generalization of MV-algebras, \cite{GeIo, Rac}. If a pseudo effect algebra satisfies a stronger type the Riesz Decomposition Property, RDP$_1$, then it is again an interval in a po-group (not necessarily Abelian) with strong unit, \cite{DvVe1, DvVe2}. We note that RDP$_1$ for effect algebras coincides with RDP, but for pseudo effect algebras they can be different. On the other side, every pseudo MV-algebra is an interval in a lattice ordered group (= $\ell$-group) with strong unit, \cite{Dvu1}, and it can be characterized as a pseudo effect algebra with RDP$_2,$ \cite{DvVe2}.

Recently,
in \cite{JiMo}, there was presented an interesting example of a pseudo BL-algebra which was described by the group of integers, $\mathbb Z,$ such that the universe of it was ordinal sum of its positive cone $\mathbb Z^+$ down and with $\mathbb Z^-\times \mathbb Z^-$ up and with a special kind of multiplication which resembles the wreath product. This idea was generalized in \cite{DvKo} for an any $\ell$-group and the whole theory of such algebras was presented there.  Since the shape of the pseudo effect algebra from \cite{JiMo} looks like a kite, we call them kites. Such algebras have their universe of the form $(G^+)^J$ down and $(G^-)^I$ up, where $G$ is an $\ell$-group, $I$ and $J$ are sets such that $|J|\le |I|,$ and multiplication uses two injections $\lambda,\rho: J\to I.$  If $|I|= |J|$ and $\lambda$ and $\rho$ are bijections,  the resulting algebra is a pseudo MV-algebra, \cite[Lem 3.4 (2)]{DvKo}.

In the present paper, we generalize these kites also for po-groups in order to obtain a new class of pseudo effect algebras, which we will call kite pseudo effect algebras. Such algebras are always perfect, i.e., every its element is either infinitesimal or co-infinitesimal (i.e. a negation of an infinitesimal). We note that an element $a$ of a pseudo effect algebra $E$ is infinitesimal if $na=a+\cdots +a$ ($n$-times addition of $a$)  is defined in $E$ for every integer $n\ge 1.$ We note that some families of perfect pseudo effect algebras were studied in \cite{Dvu4, DvXi, DvKr}.

The aim of the paper is to present the class of kite pseudo effect algebras. 
This class of pseudo effect algebras enriches the known store of examples of pseudo effect algebras, and it again shows  importance of po-groups for theory of pseudo effect algebras.

The paper is organized as follows. Section 2 gathers elements of pseudo effect algebras and pseudo MV-algebras. Kite pseudo effect algebras are described in Section 3 with some examples. In Section 4, there are kite pseudo effect algebras studied from the point of view of different types of the Riesz Decomposition Properties. Section 5 shows kite pseudo effect algebras as perfect pseudo effect algebras, and we present some representation of perfect pseudo effect algebras. Since our algebras are partial, instead of subdirect irreducibility we study in Section 6 the least non-trivial normal ideals of them and their representation.

\section{Elements of Pseudo Effect Algebras and Pseudo MV-algebras}

\subsection{Pseudo Effect Algebras}
By \cite{DvVe1, DvVe2}, we say that a {\it pseudo effect algebra} is  a partial algebra  $ E=(E; +, 0, 1)$, where $+$ is a partial binary operation and $0$ and $1$ are constants, such that for all $a, b, c
\in E$, the following holds

\begin{enumerate}
\item[(i)] $a+b$ and $(a+b)+c$ exist if and only if $b+c$ and
$a+(b+c)$ exist, and in this case $(a+b)+c = a+(b+c)$;

\item[(ii)]
  there is exactly one $d \in E$ and
exactly one $e \in E$ such that $a+d = e+a = 1$;

\item[(iii)]
 if $a+b$ exists, there are elements $d, e
\in E$ such that $a+b = d+a = b+e$;

\item[(iv)] if $1+a$ or $a+1$ exists, then $a = 0$.
\end{enumerate}

If we define $a \le b$ if and only if there exists an element $c\in
E$ such that $a+c =b,$ then $\le$ is a partial ordering on $E$ such
that $0 \le a \le 1$ for any $a \in E.$ It is possible to show that
$a \le b$ if and only if $b = a+c = d+a$ for some $c,d \in E$. We
write $c = a \minusre b$ and $d = b \minusli a.$ Then

$$ (b \minusli a) + a = a + (a \minusre b) = b,
$$
and we write $a^- = 1 \minusli a$ and $a^\sim = a\minusre 1$ for any
$a \in E.$ Then $a^-+a=1=a+a^\sim$ and $a^{-\sim}=a=a^{\sim-}$ for any $a\in E.$

For basic properties of pseudo effect algebras see \cite{DvVe1} and
\cite{DvVe2}. We note that a pseudo effect algebra is an {\it effect algebra}  iff $+$ is commutative.

A mapping $h$ from a pseudo effect algebra $E$ into another one $F$ is said to be a {\it homomorphism} if (i) $h(1)=1,$ and (ii) if $a+b$ is defined in $E,$ so is defined $h(a)+h(b)$ and $h(a+b)= h(a)+h(b).$ A homomorphism $h$ is an {\it isomorphism} if $h$ is injective a surjective and also $h^{-1}$ is a homomorphism.

We remind that a {\it po-group} (= partially ordered group) is a
group $G=(G;+,-,0)$ endowed with a partial order $\le$ such that if $a\le b,$ $a,b
\in G,$ then $x+a+y \le x+b+y$ for all $x,y \in G.$  We denote by
$G^+:=\{g \in G: g \ge 0\}$ and $G^-:=\{g \in G: g \le 0\}$ the {\it positive cone} and the {\it negative cone} of $G.$ If, in addition, $G$
is a lattice under $\le$, we call it an $\ell$-group (= lattice
ordered group). An element $u \in G^+$ is said to be a {\it strong unit} (or an {\it order unit}) if, given $g \in G,$ there is an integer $n \ge 1$ such that $g \le nu.$ The pair $(G,u),$ where $u$ is a fixed strong unit of $G,$ is said to be a {\it unital po-group}.

We denote by  $\mathbb Z$ the commutative $\ell$-group of integers.

A po-group $G$ is said to be {\it directed} if, given $g_1,g_2 \in G,$ there is an element $g \in G$ such that $g \ge g_1,g_2.$ Equivalently, $G$ is directed iff every element $g\in G$ can be expressed as a difference of two positive elements of $G.$ For example, every $\ell$-group or every  po-group with strong unit is directed.
For more information on po-groups and $\ell$-groups we recommend the books \cite{Dar, Fuc, Gla}.

Now let  $G$ be a po-group and fix $u \in G^+.$ If we set $\Gamma(G,u):=[0,u]=\{g \in G: 0 \le g \le u\},$ then $\Gamma(G,u)=(\Gamma(G,u); +,0,u)$ is a pseudo effect algebra, where $+$ is the restriction of the group addition $+$ to $[0,u],$ i.e. $a+b$ is defined in $\Gamma(G,u)$ for $a,b \in \Gamma(G,u)$ iff $a+b \in \Gamma(G,u).$ Then $a^-=u-a$ and $a^\sim=-a+u$ for any $a \in \Gamma(G,u).$ A pseudo effect algebra which is isomorphic to some $\Gamma(G,u)$ for some po-group $G$ with $u>0$ is said to be an {\it interval pseudo effect algebra}.

A pseudo effect algebra $E$ is said to be {\it  symmetric} if $a^-=a^\sim$ for each $a\in E.$ We note that if $E$ is a symmetric pseudo effect algebra, then it is not necessarily an effect algebra. Indeed, if $G$ is a non-commutative po-group, let $\mathbb Z\lex G$ denote the lexicographic product of the group of integers, $\mathbb Z$, with $G.$ Then $u=(1,0)$ is a strong unit, and $E=\Gamma(\mathbb Z \lex G,(1,0))$ is a pseudo effect algebra with $a^-=a^\sim$ for all $a\in E,$ but $E$ is not an effect algebra. We note that if $(G,u)$ is a unital po-group with RDP (for the definition of RDP see Subsection 2.3),  the pseudo effect algebra $\Gamma(G,u)$ is symmetric iff $u$ belongs to the commutative center $C(G):=\{c\in G: c+g=g+c, \forall \ g \in G\}.$

\subsection{Pseudo MV-algebras and MV-algebras}

We remind that an MV-algebra is an algebra $M = (M;\oplus, ^*,0,1)$
of type (2,1,0,0) such that, for all $a,b,c \in M$, we have

\begin{enumerate}
\item[(i)]  $a \oplus  b = b \oplus a$;
\item[(ii)] $(a\oplus b)\oplus c = a \oplus (b \oplus c)$;
\item[(iii)] $a\oplus 0 = a;$
\item[(iv)] $a\oplus 1= 1;$
\item[(v)] $(a^*)^* = a;$
\item[(vi)] $0^* = 1;$
\item[(vii)] $(a^*\oplus b)^*\oplus b=(a\oplus b^*)^*\oplus a.$
\end{enumerate}

If we define $a\le b$ iff there is an element $c\in M$ such that $a\oplus c=b$, then $\le$ is a partial order and $a\vee b = (a^*\oplus b)^*\oplus b.$  With respect to this order, $M$ is a distributive lattice. In addition, we can define also another  binary operation $a\odot b= (a^*\oplus b^*)^*.$\footnote{Notational convention: $\odot$ binds stronger than $\oplus$.}

According to \cite{GeIo}, a {\it pseudo MV-algebra} is an algebra $ M=(M;
\oplus,^-,^\sim,0,1)$ of type $(2,1,1,$ $0,0)$ such that the
following axioms hold for all $x,y,z \in M$ with an additional
binary operation $\odot$ defined via $$ y \odot x =(x^- \oplus y^-)
^\sim $$
\begin{enumerate}
\item[{\rm (A1)}]  $x \oplus (y \oplus z) = (x \oplus y) \oplus z;$

\item[{\rm (A2)}] $x\oplus 0 = 0 \oplus x = x;$

\item[{\rm (A3)}] $x \oplus 1 = 1 \oplus x = 1;$

\item[{\rm (A4)}] $1^\sim = 0;$ $1^- = 0;$

\item[{\rm (A5)}] $(x^- \oplus y^-)^\sim = (x^\sim \oplus y^\sim)^-;$

\item[{\rm (A6)}] $x \oplus (x^\sim \odot y) = y \oplus (y^\sim
\odot x) = (x \odot y^-) \oplus y = (y \odot x^-) \oplus
x;$

\item[{\rm (A7)}] $x \odot (x^- \oplus y) = (x \oplus y^\sim)
\odot y;$

\item[{\rm (A8)}] $(x^-)^\sim= x.$
\end{enumerate}

A pseudo MV-algebra $M$ is an MV-algebra iff $a\oplus b = b\oplus a,$ $a,b \in M.$

For example, if $u$ is a strong unit of a (not necessarily Abelian)
$\ell$-group $G$,
$$\Gamma(G,u) := [0,u]
$$
and
\begin{eqnarray*}
x \oplus y &:=&
(x+y) \wedge u,\\
x^- &:=& u - x,\\
x^\sim &:=& -x +u,\\
x\odot y&:= &(x-u+y)\vee 0,
\end{eqnarray*}
then $\Gamma(G,u)=(\Gamma(G,u);\oplus, ^-,^\sim,0,u)$ is a pseudo MV-algebra.

The basic result on theory of pseudo MV-algebras \cite{Dvu1} is the following representation theorem.

\begin{theorem}\label{th:2.1}
Every pseudo MV-algebra is an interval $\Gamma(G,u)$ in a unique (up to isomorphism)  unital $\ell$-group $(G,u).$

In addition, the functor $\Gamma: (G,u) \mapsto \Gamma(G,u)$ defines a categorical equivalence between the variety of pseudo MV-algebras and the category of unital $\ell$-groups.
\end{theorem}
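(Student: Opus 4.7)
The plan is to mimic the Mundici--Chang categorical equivalence between MV-algebras and unital Abelian $\ell$-groups, suitably adapted to the non-commutative pseudo MV setting, via a good-sequence construction.

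First, I would introduce \emph{good sequences} in $M$: sequences $\mathbf{a}=(a_1,a_2,\ldots)$ of elements of $M$ satisfying $a_i\oplus a_{i+1}=a_i$ for all $i\ge 1$ and $a_n=0$ for all large $n$. Addition $\mathbf{a}+\mathbf{b}$ is defined coordinatewise via $\oplus$ with a ``carry'' term, exactly as in the MV-algebra case. Using axioms (A6) and (A7), I would verify that the set $\mathcal{M}_M$ of good sequences forms a cancellative (though in general non-commutative) monoid under $+$ with neutral element $\mathbf{0}=(0,0,\ldots)$.

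Second, I would show that $\mathcal{M}_M$ satisfies the two-sided Ore condition, so that it embeds into a group $G$ whose elements are left (equivalently right) differences of good sequences. Put on $G$ the partial order whose positive cone is the image of $\mathcal{M}_M$. Extending the lattice structure of $M$ (given by $a\vee b=(a^-\oplus b)^\sim\oplus b$ and its dual) coordinatewise to good sequences, I would check via (A6) and (A7) that $G$ is an $\ell$-group. The element $u:=[(1,0,0,\ldots)]$ is a strong unit, since any good sequence $(a_1,\ldots,a_n,0,\ldots)$ is dominated by $nu$.

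Third, I would show that $\iota\colon a\mapsto[(a,0,0,\ldots)]$ is an isomorphism $M\to\Gamma(G,u)$ of pseudo MV-algebras; injectivity is immediate, and surjectivity amounts to observing that every class in $[0,u]$ admits a representative of that single-entry form. Uniqueness of $(G,u)$ up to isomorphism follows from the fact that its positive cone is generated as a monoid by the image of $M$, so any other unital $\ell$-group realising $M$ as its unit interval is canonically isomorphic to $(G,u)$. For the categorical equivalence, every pseudo MV-homomorphism $f\colon M\to N$ extends uniquely through its action on good sequences to a unital $\ell$-group morphism $\widetilde f\colon (G_M,u_M)\to (G_N,u_N)$, and one checks that $\Gamma$ and the construction $M\mapsto(G_M,u_M)$ are quasi-inverse functors.

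The main obstacle will be Step 2, together with the $\ell$-property of $G$. Proving the Ore condition in the non-commutative case relies critically on the four-fold identity (A6), which equates expressions mixing $\oplus$, $\odot$, $^-$ and $^\sim$, thereby coupling left and right complements; and axiom (A7) supplies the identities that force the directed group $G$ to carry a lattice structure rather than just a partial order. Once these structural facts are in place, the remaining verifications—strong unit property, isomorphism with $\Gamma(G,u)$, and functoriality of the morphism extension—follow in a routine manner, mirroring Mundici's original argument.
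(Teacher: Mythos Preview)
The paper does not prove Theorem~\ref{th:2.1} at all: it is quoted as a known result from \cite{Dvu1} and used as background. So there is no ``paper's own proof'' to compare your proposal against.

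That said, your approach is \emph{not} the one taken in \cite{Dvu1}. Dvure\v{c}enskij does not use Mundici-style good sequences; instead he works with the partial addition $+$ on $M$ (so $a+b$ is defined iff $a\le b^\sim$, and then $a+b=a\oplus b$), shows this turns $M$ into a pseudo effect algebra satisfying \RDP$_2$, and then builds the enveloping $\ell$-group from that partial structure. The good-sequence route has a real obstruction in the non-commutative setting that you have underestimated: when you define addition of good sequences with a carry term, the carry must propagate either from the left or from the right, and in a pseudo MV-algebra these give genuinely different candidate operations. Proving that either choice yields an associative operation, let alone that the resulting monoid is cancellative and satisfies an Ore condition, is not a matter of quoting (A6)--(A7); in the commutative case these verifications lean heavily on $a\oplus b=b\oplus a$ at every step of the carry propagation. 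You acknowledge Step~2 as the main obstacle, but the difficulty already appears in Step~1 (well-definedness and associativity of good-sequence addition), and to my knowledge no clean good-sequence proof of the pseudo MV representation theorem exists in the literature. If you want to carry this out, you would need to supply substantially more detail than ``using axioms (A6) and (A7), I would verify\ldots''.
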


\subsection{Riesz Decomposition Properties}
The important Riesz Decomposition Property, (RDP for short) means roughly speaking a property that every two decompositions have a joint refinement. For non-commutative structures we can define more types of RDPs and some of them in commutative structures coincide.

The following kinds of the Riesz Decomposition properties were introduced in \cite{DvVe1,DvVe2} and are crucial for the study of pseudo effect algebras.

We say that a  po-group $G$ satisfies

\begin{enumerate}
\item[(i)]
the {\it Riesz Interpolation Property} (RIP for short) if, for $a_1,a_2, b_1,b_2\in G,$  $a_1,a_2 \le b_1,b_2$  implies there exists an element $c\in G$ such that $a_1,a_2 \le c \le b_1,b_2;$

\item[(ii)]
\RDP$_0$  if, for $a,b,c \in G^+,$ $a \le b+c$, there exist $b_1,c_1 \in G^+,$ such that $b_1\le b,$ $c_1 \le c$ and $a = b_1 +c_1;$

\item[(iii)]
\RDP\  if, for all $a_1,a_2,b_1,b_2 \in G^+$ such that $a_1 + a_2 = b_1+b_2,$ there are four elements $c_{11},c_{12},c_{21},c_{22}\in G^+$ such that $a_1 = c_{11}+c_{12},$ $a_2= c_{21}+c_{22},$ $b_1= c_{11} + c_{21}$ and $b_2= c_{12}+c_{22};$

\item[(iv)]
\RDP$_1$  if, for all $a_1,a_2,b_1,b_2 \in G^+$ such that $a_1 + a_2 = b_1+b_2,$ there are four elements $c_{11},c_{12},c_{21},c_{22}\in G^+$ such that $a_1 = c_{11}+c_{12},$ $a_2= c_{21}+c_{22},$ $b_1= c_{11} + c_{21}$ and $b_2= c_{12}+c_{22}$, and $0\le x\le c_{12}$ and $0\le y \le c_{21}$ imply  $x+y=y+x;$

\item[(v)]
\RDP$_2$  if, for all $a_1,a_2,b_1,b_2 \in G^+$ such that $a_1 + a_2 = b_1+b_2,$ there are four elements $c_{11},c_{12},c_{21},c_{22}\in G^+$ such that $a_1 = c_{11}+c_{12},$ $a_2= c_{21}+c_{22},$ $b_1= c_{11} + c_{21}$ and $b_2= c_{12}+c_{22}$, and $c_{12}\wedge c_{21}=0.$

\end{enumerate}

If, for $a,b \in G^+,$ we have for all $0\le x \le a$ and $0\le y\le b,$ $x+y=y+x,$ we denote this property by $a\, \mbox{\rm \bf com}\, b.$

The RDP will be denoted by the following table:

$$
\begin{matrix}
a_1  &\vline & c_{11} & c_{12}\\
a_{2} &\vline & c_{21} & c_{22}\\
  \hline     &\vline      &b_{1} & b_{2}
\end{matrix}\ \ .
$$

For Abelian po-groups, RDP, RDP$_1,$ RDP$_0$ and RIP are equivalent.

By \cite[Prop 4.2]{DvVe1} for directed po-groups, we have
$$
\RDP_2 \quad \Rightarrow \RDP_1 \quad \Rightarrow \RDP \quad \Rightarrow \RDP_0 \quad \Leftrightarrow \quad  \RIP,
$$
but the converse implications do not hold, in general.  A directed po-group $G$ satisfies \RDP$_2$ iff $G$ is an $\ell$-group, \cite[Prop 4.2(ii)]{DvVe1}.

We say that a pseudo effect algebra $E$ satisfies the above types of the Riesz decomposition properties, if in the definition of RDP's, we change $G^+$ to $E.$

The basic result of pseudo effect algebras, which binds RDPs with pseudo effect algebras, is the following representation theorem \cite[Thm 7.2]{DvVe2}:

\begin{theorem}\label{th:2.2}
For every pseudo effect algebra with \RDP$_1,$ there is a unique (up to isomorphism of unital po-groups) unital po-group $(G,u)$ with \RDP$_1$\ such that $E \cong \Gamma(G,u).$

In addition, $\Gamma$ defines a categorical equivalence between the category of pseudo effect algebras with \RDP$_1$ and the category of unital po-groups with \RDP$_1.$
\end{theorem}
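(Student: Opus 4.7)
The plan is to adapt the Mundici--Ravindran-style Grothendieck construction to the partial non-commutative setting. Given a pseudo effect algebra $E$ with \RDP$_1$, the goal is to build a unital po-group $(G_E,u_E)$ with \RDP$_1$ together with an isomorphism $E\cong \Gamma(G_E,u_E)$, and to show that this assignment is functorial and serves as a quasi-inverse to $\Gamma$.

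First I would construct a positive cone directly from the partial addition of $E$. Let $W(E)$ be the free monoid of finite words over $E$ under concatenation, and let $\sim$ be the smallest congruence such that $(a,b)\sim (a+b)$ whenever $a+b$ is defined in $E$, with the convention that $0$-letters may be inserted or deleted. Set $M:=W(E)/{\sim}$. The crucial step, and the main use of \RDP$_1$, is a common refinement lemma: any two words $(a_1,\dots,a_n)$ and $(b_1,\dots,b_m)$ representing the same class in $M$ admit a joint refinement by elements $c_{ij}\in E$ with compatible row and column partial sums, in such a way that the pieces lying in ``swapped'' cells commute elementwise. From this lemma one extracts well-definedness of $+$ on $M$ and two-sided cancellation.

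Next, form $G_E$ as the group of formal differences of $M$: take $M\times M$ modulo $(x,y)\sim (x+z,y+z)$ and the symmetric left version, with componentwise addition and $-(x,y)=(y,x)$. Two-sided cancellation, together with an Ore-type condition that is another consequence of \RDP$_1$ (every pair $x,y\in M$ admits $x',y'\in M$ with $x+y'=y+x'$), makes this a well-defined group into which $M$ embeds. Declaring $M$ the positive cone turns $G_E$ into a directed po-group, and $u_E:=[1]$ is a strong unit because $1$ dominates every element of $E$. The map $a\mapsto [a]$ is then an isomorphism $E\to\Gamma(G_E,u_E)$, and \RDP$_1$ transfers from $E$ to $G_E^+=M$ by construction. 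Uniqueness follows because any unital po-group $(G,u)$ with \RDP$_1$ and $\Gamma(G,u)\cong E$ is generated as a group by $[0,u]$, and the relations among elements of $[0,u]$ induced by $+$ are precisely those defining $M$, so the universal property forces $G\cong G_E$.

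For the categorical equivalence, a pseudo effect algebra morphism $h:E\to F$ extends to a unital po-group morphism $\tilde h:G_E\to G_F$ by letterwise action on representatives, well-defined since $h$ preserves $+$; this assignment is functorial, one has $\Gamma(\tilde h)=h$, and for a unital po-group morphism $\phi:(G,u)\to(H,v)$ between po-groups with \RDP$_1$ one obtains $\widetilde{\Gamma(\phi)}=\phi$ under the identification $G\cong G_{\Gamma(G,u)}$, yielding the claimed equivalence of categories. The principal obstacle throughout is the common refinement lemma: in the non-commutative setting one cannot simply permute summands, and it is precisely \RDP$_1$---which forces the off-diagonal cells $c_{12}$ and $c_{21}$ of any refinement to commute elementwise---that supplies the combinatorial flexibility needed to reconcile different partitions of the same element while preserving positivity and order.
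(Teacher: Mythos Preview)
The paper does not give its own proof of this theorem: it is stated as background and attributed to \cite[Thm 7.2]{DvVe2}, with no argument supplied. There is therefore nothing in the present paper to compare your proposal against.

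That said, your sketch follows the expected route---the Mundici--Ravindran word-monoid construction adapted to the non-commutative setting---and this is indeed essentially how the result is proved in the original source \cite{DvVe1,DvVe2}. Your identification of the key technical point is correct: the common-refinement lemma, powered by \RDP$_1$ and in particular by the $\com$ relation between the off-diagonal cells, is what makes the word monoid cancellative and allows one to swap summands enough to verify an Ore condition and hence embed $M$ into a group of fractions. One caution worth making explicit: in the non-commutative case two-sided cancellation alone does not guarantee embeddability into a group (Mal'cev), so the Ore condition you mention is not optional but essential, and you should verify both the left and right versions. With that in hand, the rest of your outline---directedness via $u_E=[1]$, transfer of \RDP$_1$ to $G_E^+$, uniqueness from the universal property, and functoriality---is routine and matches the structure of the original argument.
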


Let $M$ be a pseudo MV-algebra. We define a partial binary operation, $+,$ on $M$ such that $a+b$ is defined in $M$ iff $a\odot b = 0,$ and in such a case, $a+ b:= a \oplus b.$ Then $(M;+,0,1)$ is a pseudo effect algebra with RDP$_2.$ Conversely, every pseudo effect algebra with RDP$_2$ can transformed into a pseudo MV-algebra. Equivalently, a pseudo effect algebra $E$ with RDP$_1$ satisfies RDP$_2$ iff $E$ is a lattice, see \cite[Thm 8.8]{DvVe2}.

Thus the class of pseudo MV-algebras forms an important subclass of the class of pseudo effect algebras.

\section{Kite Pseudo Effect Algebras}

Take a po-group $G=(G;\cdot,^{-1},e)$ which is written multiplicatively with an inversion $^{-1}$ and with the identity element $e$ equipped with a partial order $\le.$

Let $I$ be a set. Since only the cardinality of $I$ is important for the following construction,  we assume that $I$ is an ordinal. Define an algebra whose
universe is the set $(G^+)^I \uplus (G^-)^I,$ where $\uplus$ denotes a union of disjoint sets.
We order its universe by keeping the original co-ordinatewise ordering
within $(G^+)^I$ and $(G^-)^I$, and setting $x\leq y$ for all
$x\in(G^+)^I$, $y\in(G^-)^I$. Then $\le$ is a partial order on $(G^+)^I \uplus (G^-)^I.$ Then the element $e^I:=\langle e: i \in I\rangle$ appears twice: at the bottom of $(G^+)^I$ and at the top of $(G^-)^I$. To
avoid confusion in the definitions below, we adopt a convention of writing
$a_i^{-1},b_i^{-1}, \dots$ for co-ordinates of elements of $(G^-)^I$ and
$f_j,g_j,\dots$ for co-ordinates of elements of $(G^+)^I$. In particular, we
will write $e^{-1}$ for $e$ as an element of $G^-$. We also put $1$ for the
constant sequence $(e^{-1})^I:=\langle e^{-1}: i \in
I\rangle$ and $0$ for the constant sequence $e^I$. Then $0$ and $1$ are the least and greatest elements of $(G^+)^I \uplus (G^-)^I.$

Let $\lambda, \rho: I \to I$ be bijections.

First of all we assume that a po-group $G$ is an $\ell$-group. In such a case, according to \cite{DvKo}, we define multiplication $\cdot$ on $(G^+)^I \uplus (G^-)^I$ by

\begin{align*}
\langle a_i^{-1}\colon i\in I\rangle\cdot\langle b_i^{-1}\colon i\in I\rangle &=
  \langle(b_ia_i)^{-1}\colon i\in I\rangle\\
\langle a_i^{-1}\colon i\in I\rangle\cdot\langle f_j\colon j\in I\rangle &=
  \langle a_{\lambda(j)}^{-1}f_j\vee e\colon j\in I\rangle\\
\langle f_j\colon j\in I\rangle\cdot\langle a_i^{-1}\colon i\in I\rangle &=
  \langle f_ja_{\rho(j)}^{-1}\vee e\colon j\in I\rangle\\
\langle f_j\colon j\in I\rangle\cdot\langle g_j\colon j\in I\rangle &=
  \langle e\colon j\in I\rangle = 0.\\
\end{align*}

Similarly, we define divisions, $\rd$ and $\ld$, as follows
\begin{align*}
\langle a_i^{-1}\colon i\in I\rangle\ld\langle b_i^{-1}\colon i\in I\rangle &=
 \langle a_ib_i^{-1}\wedge e^{-1}\colon i\in I\rangle\\
\langle b_i^{-1}\colon i\in I\rangle\rd\langle a_i^{-1}\colon i\in I\rangle &=
 \langle b_i^{-1}a_i\wedge e^{-1}\colon i\in I\rangle\\
\langle a_i^{-1}\colon i\in I\rangle\ld\langle f_j\colon j\in I\rangle &=
 \langle a_{\lambda(j)}f_j\colon j\in I\rangle\\
\langle f_j\colon j\in I\rangle\rd\langle a_i^{-1}\colon i\in I\rangle &=
 \langle f_ja_{\rho(j)}\colon j\in I\rangle\\
\langle f_j\colon j\in I\rangle\ld\langle g_j\colon j\in I\rangle &= \langle a_i^{-1}\colon i\in I\rangle,\\
\text{ where } a_i^{-1} &=\begin{cases}
f_{\rho^{-1}(i)}^{-1}g_{\rho^{-1}(i)}\wedge e^{-1} &
\text{ if } \rho^{-1}(i) \text{ is defined}\\
e^{-1} & \text{ otherwise}
\end{cases}\\
\langle g_j\colon j\in I\rangle\rd\langle f_j\colon j\in I\rangle &= \langle b_i^{-1}\colon i\in I\rangle,\\
\text{ where } b_i^{-1} &=\begin{cases}
g_{\lambda^{-1}(i)}f_{\lambda^{-1}(i)}^{-1}\wedge e^{-1} &
\text{ if } \lambda^{-1}(i) \text{ is defined}\\
e^{-1} & \text{ otherwise},
\end{cases}\\
\langle a_i^{-1}\colon i\in I\rangle\rd \langle f_j\colon j\in
I\rangle &=(e^{-1})^I= \langle f_j\colon j\in I\rangle \ld \langle
a_i^{-1}\colon i\in I\rangle.
\end{align*}

Let us define additional operations
$$
x\oplus y :=(x^-\cdot y^-)^\sim, \quad x,y \in (G^+)^I \uplus (G^-)^I, \eqno(  3.1)
$$
$\odot = \cdot,$ and $x^-=0\rd x$ and $x^\sim = x\ld 0$ as the {\it right} and {\it left negation}, respectively, of $x.$ According to \cite[Lem 3.3]{DvKo}, for the algebra $K_{I}^{\lambda,\rho}(G)_{mv}= ((G^+)^I \uplus (G^-)^I; \oplus, ^-,^\sim,0,1)$, we have:


\begin{theorem}\label{th:3.1}
Let $G$ be an $\ell$-group. The algebra $K_{I}^{\lambda,\rho}(G)_{mv}$ is a pseudo MV-algebra.
\end{theorem}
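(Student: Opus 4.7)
The plan is to verify each of the pseudo MV-algebra axioms (A1)--(A8) directly, by case analysis according to whether each argument lies in $(G^+)^I$ or $(G^-)^I$. I would first unfold the unary and binary operations explicitly in all sign-combinations, so that the axioms reduce to coordinatewise identities in the $\ell$-group $G$.

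The first step is to compute the negations explicitly. Using $x^- = 0 \rd x$ and $x^\sim = x \ld 0$ together with the division formulas, one finds: for $x = \langle f_j \colon j\in I\rangle \in (G^+)^I$ that $x^- = \langle f_{\lambda^{-1}(i)}^{-1}\colon i\in I\rangle$ and $x^\sim = \langle f_{\rho^{-1}(i)}^{-1}\colon i\in I\rangle$; and for $x = \langle a_i^{-1}\colon i\in I\rangle \in (G^-)^I$ that $x^- = \langle a_{\rho(j)}\colon j\in I\rangle$ and $x^\sim = \langle a_{\lambda(j)}\colon j\in I\rangle$. Because $\lambda$ and $\rho$ are bijections, $^-$ and $^\sim$ are mutually inverse bijections between the two halves of the universe, so axiom (A8) is immediate, and (A4) follows by evaluation on the constants $0 = \langle e\rangle$ and $1 = \langle e^{-1}\rangle$.

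With the negations in hand I would unfold $\oplus$, defined by $x\oplus y = (x^-\cdot y^-)^\sim$, in each of the four sign-combinations of $(x,y)$. A useful structural observation is that $\oplus$ is associative as soon as $\cdot$ is: by (A8), $(x\oplus y)^- = x^-\cdot y^-$, so both $(x\oplus y)\oplus z$ and $x\oplus(y\oplus z)$ equal $(x^-\cdot y^-\cdot z^-)^\sim$. Thus (A1) reduces to associativity of $\cdot$, which in each of the eight cases follows from associativity of the group product in $G$ together with standard $\ell$-group manipulations discharging the $\vee e$ truncations in the mixed products. Axioms (A2) and (A3) follow at once by evaluating $\oplus$ and $\cdot$ against the constants $0$ and $1$, and the De Morgan law (A5) is an immediate consequence of (A8) applied to the definition of $\oplus$.

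The main technical burden lies in axioms (A6) and (A7), which tie $\oplus$ and $\cdot$ together through both negations. For each axiom I would verify the equality in every sign-combination of its free variables. Each case unfolds to a coordinatewise identity in $G$, for which the $\ell$-group rules $(u\vee e)\cdot v = uv\vee v$ and $u\wedge e^{-1} = (u^{-1}\vee e)^{-1}$ are the main tools needed to move the truncations through a factor. The principal obstacle is not conceptual but bookkeeping: one must track carefully which of $\lambda,\rho,\lambda^{-1},\rho^{-1}$ governs each coordinate re-indexing, so that the two sides of the axiom match coordinate by coordinate and the truncations align correctly. Once this combinatorial matching is carried out and the appropriate $\ell$-group identities are invoked, all eight axioms hold, and $K_I^{\lambda,\rho}(G)_{mv}$ is a pseudo MV-algebra.
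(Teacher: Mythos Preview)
The paper does not supply a proof of Theorem~\ref{th:3.1}: the statement is simply attributed to \cite[Lem~3.3]{DvKo}, where the kite construction was originally carried out (there for injections $\lambda,\rho:J\to I$, yielding a pseudo BL-algebra; the bijective case gives a pseudo MV-algebra, cf.\ \cite[Lem~3.4(2)]{DvKo}). Your plan of direct axiom-by-axiom verification is therefore more than the present paper offers, and it is essentially the route the cited reference takes. Your computation of the negations matches the paper's Proposition~\ref{pr:3.2}, and your reduction of (A1) to associativity of $\cdot$ via (A8) is clean and correct.

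One small overstatement: (A5) is not quite an ``immediate consequence of (A8) applied to the definition of~$\oplus$.'' From $x\oplus y=(x^-\cdot y^-)^\sim$ together with $^{\sim-}=\mathrm{id}$ you do get $(x^\sim\oplus y^\sim)^-=x\cdot y$, but the other side unfolds to $(x^-\oplus y^-)^\sim=(x^{--}\cdot y^{--})^{\sim\sim}$, and (A8) alone says nothing about $^{--}$ or $^{\sim\sim}$. What makes this equal to $x\cdot y$ is that in the kite $^{--}$ and $^{\sim\sim}$ act by the mutually inverse coordinate permutations $\rho\circ\lambda^{-1}$ and $\lambda\circ\rho^{-1}$, and these cancel against the product $\cdot$ coordinatewise. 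This is an easy check in each sign case, but it is a separate computation, not a formal consequence of (A8). The same computation is what shows that the derived $\odot$ (defined via $\oplus$ and the negations) really coincides with the given $\cdot$. With that caveat, your outline is sound.
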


We note that we can define the standard total MV-operation $\oplus$ on the kite $K_{I}^{\lambda,\rho}(G)_{mv}$ as follows

Due to the Theorem \ref{th:3.1}, we will call  the algebra $K_{I}^{\lambda,\rho}(G)_{mv}$ a \emph{kite} or, more precisely, a {\it kite pseudo MV-algebra} of $G.$

For example, if $G=O$ is the zero group, then $K_{I}^{\lambda,\rho}(O)_{mv}$ is a two-element Boolean algebra. If $G=\mathbb Z$ and $|I|=1,$ then
$K_{I}^{\lambda,\rho}(\mathbb Z)_{mv} \cong \Gamma(\mathbb Z \lex \mathbb Z,(1,0))$ is the Chang MV-algebra. If $G$ is arbitrary and $|I|=1,$ then $K_{I}^{\lambda,\rho}(G)_{mv}$ is isomorphic to the pseudo MV-algebra $\Gamma(\mathbb Z \lex G, (1,0)).$

In the same way as for pseudo effect algebras, we say that a pseudo MV-algebra $M$ is {\it symmetric} if $x^-=x^\sim$ for $x \in M.$

It is easy to show that:

\begin{proposition}\label{pr:3.2}
For the negations in the kite $K_{I}^{\lambda,\rho}(G)_{mv},$ we have
\begin{align*}
\langle a_i^{-1}\colon i\in I\rangle^\sim &= \langle a_{\lambda(j)}\colon j\in I\rangle\\
\langle a_i^{-1}\colon i\in I\rangle^-&= \langle a_{\rho(j)}\colon j\in I\rangle\\
\langle f_j\colon j\in I\rangle^\sim &= \langle f^{-1}_{\rho^{-1}(i)}\colon i\in I\rangle\\
\langle f_j\colon j\in I\rangle^-&= \langle f^{-1}_{\lambda^{-1}(i)}\colon i\in I\rangle.
\end{align*}
\end{proposition}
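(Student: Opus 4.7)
The plan is to unfold the definitions $x^- = 0\rd x$ and $x^\sim = x\ld 0$, interpreting $0$ as the sequence $\langle e : j\in I\rangle\in (G^+)^I$, and then apply the appropriate formula for $\rd$ or $\ld$ from the list at the start of the section. In each of the four cases the answer is read off by a one-line substitution, so no real obstacle should arise; the work is purely a matter of matching the form of the argument to the correct clause in the definition of the divisions.

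More specifically, I would proceed case by case. For $\langle a_i^{-1}\colon i\in I\rangle^\sim$, use the clause
\[
\langle a_i^{-1}\colon i\in I\rangle\ld\langle f_j\colon j\in I\rangle = \langle a_{\lambda(j)}f_j\colon j\in I\rangle
\]
with $f_j=e$ to obtain $\langle a_{\lambda(j)}\colon j\in I\rangle$. For $\langle a_i^{-1}\colon i\in I\rangle^-$, apply
\[
\langle f_j\colon j\in I\rangle\rd\langle a_i^{-1}\colon i\in I\rangle = \langle f_j a_{\rho(j)}\colon j\in I\rangle
\]
again with $f_j=e$ to obtain $\langle a_{\rho(j)}\colon j\in I\rangle$.

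For the two remaining cases, I would use the slightly more delicate clauses governing $\langle f_j\rangle\ld\langle g_j\rangle$ and $\langle g_j\rangle\rd\langle f_j\rangle$, whose outputs are given by a pointwise meet with $e^{-1}$ in $G^-$. Setting $g_j=e$, the expressions $f_{\rho^{-1}(i)}^{-1}g_{\rho^{-1}(i)}\wedge e^{-1}$ and $g_{\lambda^{-1}(i)}f_{\lambda^{-1}(i)}^{-1}\wedge e^{-1}$ reduce to $f_{\rho^{-1}(i)}^{-1}\wedge e^{-1}$ and $f_{\lambda^{-1}(i)}^{-1}\wedge e^{-1}$, respectively. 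Since $f_j\in G^+$ implies $f_j^{-1}\in G^-$, i.e. $f_j^{-1}\le e^{-1}$, these meets collapse to $f_{\rho^{-1}(i)}^{-1}$ and $f_{\lambda^{-1}(i)}^{-1}$. Because $\lambda$ and $\rho$ are bijections of $I$, the case distinction in the definition of the divisions is never triggered by the ``otherwise'' branch, and we recover exactly the formulas claimed for $\langle f_j\rangle^\sim$ and $\langle f_j\rangle^-$.

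The only mild subtlety worth noting, rather than a real obstacle, is the bookkeeping regarding the coordinate indices: after the division one switches from a $(G^+)^I$-indexed sequence (indexed by $j$) to a $(G^-)^I$-indexed sequence (indexed by $i$) or vice versa, and this is precisely where $\lambda$, $\rho$ and their inverses enter. Keeping the convention $a_i^{-1}\in G^-$ versus $f_j\in G^+$ straight throughout is enough to make the verification transparent.
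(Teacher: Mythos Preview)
Your proposal is correct and is exactly the kind of direct computation the paper has in mind: the paper simply prefaces the proposition with ``It is easy to show that'' and gives no proof, so unfolding $x^-=0\rd x$ and $x^\sim=x\ld 0$ and reading off the four clauses of the division table is precisely the intended verification.
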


\begin{proposition}\label{pr:3.3}
Let $G$ be a non-trivial $\ell$-group.

{\rm (1)} The kite $K_{I}^{\lambda,\rho}(G)_{mv}$ is a symmetric pseudo MV-algebra if and only if $\lambda =\rho.$

{\rm (2)} The kite $K_{I}^{\lambda,\rho}(G)_{mv}$ is an MV-algebra if and only if  $\lambda =\rho$ and $G$ is an Abelian $\ell$-group.
\end{proposition}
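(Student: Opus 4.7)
The plan is to read off both negations explicitly from Proposition \ref{pr:3.2} and then compare them coordinate by coordinate. Both parts reduce to direct inspection, with non-triviality of $G$ only used to produce separating examples.

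For \textup{(1)}, on an element $x=\langle a_i^{-1}\colon i\in I\rangle\in(G^-)^I$ Proposition \ref{pr:3.2} gives $x^\sim=\langle a_{\lambda(j)}\colon j\in I\rangle$ and $x^-=\langle a_{\rho(j)}\colon j\in I\rangle$, so the implication $\lambda=\rho\Rightarrow x^\sim=x^-$ is immediate, and the same holds on $(G^+)^I$ via the $\lambda^{-1},\rho^{-1}$ formulas. Conversely, assuming $\lambda(j_0)\neq\rho(j_0)$ for some $j_0\in I$, I would pick $g\in G^-\setminus\{e\}$ (which exists because $G$ is non-trivial, replacing any non-identity positive element by its inverse if necessary) and take $x$ with $a_{\lambda(j_0)}^{-1}=g$ and $a_i^{-1}=e$ for $i\neq\lambda(j_0)$; then the $j_0$-th coordinate of $x^\sim$ is $g^{-1}\neq e$, while the $j_0$-th coordinate of $x^-$ is $e$, so $x^\sim\neq x^-$.

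For \textup{(2)}, one direction is immediate: every MV-algebra is a symmetric pseudo MV-algebra, so part (1) already forces $\lambda=\rho$. Since $x\oplus y=(x^-\cdot y^-)^\sim$ and both negations are bijections of the kite, commutativity of $\oplus$ is equivalent to commutativity of the multiplication $\cdot$. Applying this to constant sequences $\langle a^{-1}\rangle,\langle b^{-1}\rangle\in(G^-)^I$, whose product is $\langle(ba)^{-1}\rangle$, forces $ab=ba$ for all $a,b\in G^+$; since $G$ is an $\ell$-group, hence directed, $G^+$ generates $G$ and therefore $G$ itself is Abelian. For the converse, assuming $\lambda=\rho$ and $G$ Abelian, I would verify that $\cdot$ is commutative in each of the four cases of the kite multiplication table. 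The pure $(G^-)^I\times(G^-)^I$ and $(G^+)^I\times(G^+)^I$ cases are trivial (the latter always yielding $0$), and the two mixed formulas become $\langle a_{\lambda(j)}^{-1}f_j\vee e\rangle$ and $\langle f_j a_{\lambda(j)}^{-1}\vee e\rangle$, which coincide by $\lambda=\rho$ together with commutativity of $G$.

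The only subtlety here is bookkeeping: one must keep straight the coordinate reindexing by $\lambda,\rho$ (and, on the $(G^+)^I$ side, by their inverses), and the convention that components of elements of $(G^-)^I$ are written with a formal $^{-1}$ superscript. Once Proposition \ref{pr:3.2} and the defining formulas for $\cdot$ are invoked, everything reduces to elementary comparisons of sequences coordinate by coordinate.
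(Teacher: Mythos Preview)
Your proof is correct and follows essentially the same route as the paper's: both parts hinge on reading off the negations from Proposition~\ref{pr:3.2} and inspecting the multiplication table, with non-triviality of $G$ supplying a separating element for the converse of (1). Your version is somewhat more explicit---you construct a concrete one-dimensional witness in (1), and in (2) you spell out the equivalence between commutativity of $\oplus$ and of $\cdot$ and note that commutativity on $G^+$ extends to $G$ by directedness---whereas the paper leaves these as routine observations, but the underlying argument is the same.
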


\begin{proof}

(1) By Proposition \ref{pr:3.2}, it is clear that if $\lambda =\rho,$ then the kite $K_{I}^{\lambda,\rho}(G)_{mv}$ is  symmetric.

Conversely, let  $K_{I}^{\lambda,\rho}(G)_{mv}$ be a symmetric pseudo MV-algebra. From Proposition \ref{pr:3.2}, we conclude $a_{\lambda(j)}=a_{\rho(j)}$ for any $j \in I.$ Since $G \ne O,$ $G$ has at least two different elements and combining different elements, we have $\lambda = \rho.$

(2) From the definition of multiplication $\cdot$ and divisions, $\ld, \rd,$ we easily conclude that if $G$ is an Abelian $\ell$-group and $\lambda =\rho,$ then $K_{I}^{\lambda,\rho}(G)_{mv}$ is an MV-algebra.

Conversely, assume that $K_{I}^{\lambda,\rho}(G)_{mv}$ is an MV-algebra.  Then it is symmetric and from (1), we have $\lambda(j)=\rho(j)$ for $j \in I.$ From the first line of the multiplication list, we conclude $G$ is Abelian.
\end{proof}

We note that if $\oplus$ is defined by (3.1), the partial operation $+$ induced by $\oplus$ is defined as follows:
$x +y$ is defined iff $x\cdot y=0,$ equivalently, $y \le x^-,$ or equivalently, $x \le y^\sim,$ and in such a case, $x+y := x\oplus y.$ Therefore, $(G^+)^I \uplus (G^-)^I$ endowed with this partial operation $+$ and $0,1$ is a pseudo effect algebra. Then the right and left negations of any element are defined according to Proposition \ref{pr:3.2}.

Motivated by this idea, we define a kite pseudo effect algebra starting with an arbitrary po-group, not necessarily lattice ordered.

\begin{theorem}\label{th:3.4}
Let $G$ be a po-group and $\lambda,\rho:I\to I$ be bijections. Let us endow the set $(G^+)^I \uplus (G^-)^I$ with $0,$ $1$ and with a partial operation $+$ as follows,

$$\langle a_i^{-1}\colon i\in I\rangle + \langle b_i^{-1}\colon i\in I\rangle \eqno(I)$$
is not defined;

$$ \langle a_i^{-1}\colon i\in I\rangle + \langle f_j\colon j\in I\rangle:= \langle a_i^{-1}f_{\rho^{-1}(i)}\colon i\in I\rangle \eqno(II)
$$
whenever  $f_{\rho^{-1}(i)}\le a_i,$ $i \in I;$

$$ \langle f_j\colon j\in I\rangle+ \langle a_i^{-1}\colon i\in I\rangle  := \langle f_{\lambda^{-1}(i)} a_i^{-1}\colon i\in I\rangle \eqno(III)
$$
whenever  $f_{\lambda^{-1}(i)}\le a_i,$ $i \in I,$

$$
\langle f_j\colon j\in I\rangle + \langle g_j\colon j\in I\rangle:= \langle f_j g_j\colon j\in I\rangle
\eqno(IV)
$$
for all $\langle f_j\colon j\in I\rangle$ and $\langle g_j\colon j\in I\rangle.$
Then the partial algebra $K^{\lambda,\rho}_I(G)_{ea}:=((G^+)^I \uplus (G^-)^I; +,0,1)$ becomes a pseudo effect algebra. The left and right negations are defined by Proposition {\rm \ref{pr:3.2}.}

The pseudo effect algebra $K^{\lambda,\rho}_I(G)_{ea}$ is symmetric if and only if $\lambda =\rho.$

If $G$ is directed, the pseudo effect algebra $K^{\lambda,\lambda}_I(G)_{ea}$ is an effect algebra if and only if $G$ is an Abelian po-group.

If $G$ is an $\ell$-group, then $K^{\lambda,\rho}_I(G)_{ea}$ is a pseudo effect algebra with \RDP$_2.$
\end{theorem}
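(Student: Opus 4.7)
The strategy is to verify the pseudo effect algebra axioms, then the two biconditionals, and finally the \RDP$_2$ assertion, with the last being the substantive part. For axioms (i)--(iv), the organizing observation is that rule (I) forbids up+up, so each defined sum has at most one ``up'' summand. Axiom (iv) follows because $1=\langle e^{-1}\rangle$ is up, so $1+a$ requires $a$ down with every coordinate $\le e$, forcing $a=0$; the case $a+1$ is symmetric. Axiom (ii) is read directly from Proposition \ref{pr:3.2}, which exhibits the unique right and left complements. Axiom (iii) is a coordinatewise check in each of the four type patterns in (II)--(IV). Associativity (i) is a case analysis on the eight possible up/down patterns for $(a,b,c)$: four are ruled out by (I), and in the remaining four the two parenthesizations reduce to the same group-theoretic expression at each coordinate because at most one up-element occurs.

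For the symmetry claim, Proposition \ref{pr:3.2} yields $\langle a_i^{-1}\rangle^-=\langle a_{\rho(j)}\rangle$ and $\langle a_i^{-1}\rangle^\sim=\langle a_{\lambda(j)}\rangle$; equating these while letting a coordinate range over a non-identity element of $G$ forces $\lambda=\rho$, and the converse is obvious. For the effect-algebra claim with $\lambda=\rho$ and $G$ directed, commutativity of $+$ applied to (IV) gives $f_jg_j=g_jf_j$ for all $f_j,g_j\in G^+$, so $G^+$ is pairwise commutative; since $G$ is generated as a group by $G^+$ by directedness and a group generated by a pairwise commuting set is Abelian, $G$ is Abelian. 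The converse is routine.

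The main obstacle is \RDP$_2$ when $G$ is an $\ell$-group. Given $a_1+a_2=b_1+b_2$, I split on the types of the summands. If the common value is down then all four summands are down and componentwise \RDP$_2$ in $G$ delivers the refinement. Otherwise the common value is up, so exactly one of $a_1,a_2$ and one of $b_1,b_2$ is up, yielding four sub-cases. In the two ``crossed'' sub-cases, where the up-summand sits on different sides in $a$ and in $b$, the trivial refinement in which either $c_{12}$ or $c_{21}$ equals $0$ works: the equations then force $c_{11}=b_1$, $c_{22}=a_2$ (or dually), axiom (iii) fixes the remaining entry, and the meet condition is automatic because $0$ is the least element. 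The two ``parallel'' sub-cases are the technical core. In the sub-case $a_1,b_1$ up, $a_2,b_2$ down with positive coordinates $\alpha_i,\gamma_i,\beta_k,\delta_k$, the equation reads $\alpha_{\rho(k)}\beta_k^{-1}=\gamma_{\rho(k)}\delta_k^{-1}$ for every $k$, equivalently $\gamma_{\rho(k)}=\delta_k\beta_k^{-1}\alpha_{\rho(k)}$. Put
\[
\theta_k:=\beta_k\wedge\delta_k,\quad \zeta_k:=\beta_k\theta_k^{-1},\quad \eta_k:=\delta_k\theta_k^{-1},\quad \xi_{\rho(k)}:=\eta_k\alpha_{\rho(k)},
\]
and take $c_{11}=\langle\xi_i^{-1}\rangle$ (up), $c_{12}=\langle\eta_k\rangle$, $c_{21}=\langle\zeta_k\rangle$, $c_{22}=\langle\theta_k\rangle$ (down). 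Right-distributivity of $\wedge$ over multiplication gives $\zeta_k\wedge\eta_k=(\beta_k\wedge\delta_k)\theta_k^{-1}=e$, so $c_{12}\wedge c_{21}=0$; since disjoint positives commute in an $\ell$-group, $\zeta_k\eta_k=\eta_k\zeta_k$, equivalently $\beta_k\theta_k^{-1}\delta_k=\delta_k\theta_k^{-1}\beta_k$, and combined with $\gamma_{\rho(k)}=\delta_k\beta_k^{-1}\alpha_{\rho(k)}$ this yields the consistency $\eta_k\alpha_{\rho(k)}=\zeta_k\gamma_{\rho(k)}$ that makes $c_{11}$ well-defined. The inequalities $\theta_k\le\beta_k,\delta_k$ ensure all entries lie in $G^+$ and all partial sums in the refinement are defined. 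The remaining parallel sub-case, with $c_{22}$ the up-entry, is symmetric, using $u_k:=\alpha_k\wedge\gamma_k$ and the left-translated formulas $v_k=u_k^{-1}\alpha_k$, $w_k=u_k^{-1}\gamma_k$.
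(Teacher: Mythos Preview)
Your treatment of axioms (i)--(iv), the symmetry biconditional, and the effect-algebra biconditional follows essentially the paper's route (case analysis on up/down types, with the DUD associativity case being the only one requiring care). One small notational slip: in the parallel sub-case your displayed equation $\alpha_{\rho(k)}\beta_k^{-1}=\gamma_{\rho(k)}\delta_k^{-1}$ has the factors in the wrong order for a non-Abelian $G$; the correct coordinate identity from rule (II) is $\alpha_{\rho(k)}^{-1}\beta_k=\gamma_{\rho(k)}^{-1}\delta_k$, equivalently $\beta_k^{-1}\alpha_{\rho(k)}=\delta_k^{-1}\gamma_{\rho(k)}$. Fortunately the consequence $\gamma_{\rho(k)}=\delta_k\beta_k^{-1}\alpha_{\rho(k)}$ that you actually use in the refinement \emph{is} the one that follows from the correct identity, and your verification that $\zeta_k^{-1}\eta_k=\eta_k\zeta_k^{-1}=\delta_k\beta_k^{-1}$ (via disjointness and commutation of $\zeta_k,\eta_k$) then closes the argument. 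So the proof is sound once that line is corrected.

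Where you genuinely diverge from the paper is the \RDP$_2$ clause. The paper does not verify \RDP$_2$ directly at all: it invokes Theorem~\ref{th:3.1} (established in \cite{DvKo}) that for an $\ell$-group $G$ the kite carries a pseudo MV-algebra structure $K^{\lambda,\rho}_I(G)_{mv}$, observes that the pseudo effect algebra $K^{\lambda,\rho}_I(G)_{ea}$ is exactly the one obtained from this pseudo MV-algebra, and then appeals to the general fact that every pseudo MV-algebra satisfies \RDP$_2$. Your route is a self-contained coordinatewise construction of the refinement, split by up/down type, with the ``parallel'' cases handled by the explicit formulas $\theta_k=\beta_k\wedge\delta_k$, $\zeta_k=\beta_k\theta_k^{-1}$, $\eta_k=\delta_k\theta_k^{-1}$. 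The paper's approach is much shorter but imports the pseudo MV-algebra machinery from \cite{DvKo}; your approach is longer but elementary, uses only lattice-group identities (right-distributivity of $\wedge$ and commutation of disjoint positives), and makes transparent exactly where the $\ell$-group hypothesis enters.
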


\begin{proof}
(i) To prove  associativity of $+$ we have 8 cases, and it is enough to prove only the following case  because all other are simple.

\begin{align*}
(\langle f_j\colon j\in I\rangle + \langle b_i^{-1}\colon i\in I\rangle)&+
\langle h_j\colon j\in I\rangle
=\langle f_{\lambda^{-1}(i)}b_i^{-1}\colon i\in I\rangle + \langle h_j\colon j\in I\rangle\\
&=\langle f_{\lambda^{-1}(i)}b_i^{-1}h_{\rho^{-1}(i)}\colon i\in I\rangle\\
&=\langle f_j\colon j\in I\rangle + (\langle b_i^{-1}\colon i\in I\rangle+
\langle h_j\colon j\in I\rangle).
\end{align*}
If the left hand side of the first line exists, then $f_{\lambda^{-1}(i)}\le b_i$ and $h_{\rho^{-1}(i)}\le b_i f_{\lambda^{-1}(i)}^{-1}.$ The second inequality entails $h_{\rho^{-1}(i)}\le b_i f_{\lambda^{-1}(i)}^{-1} \le b_i$ for $i \in I$ so that $ \langle b_i^{-1}\colon i\in I\rangle+
\langle h_j\colon j\in I\rangle$ and $\langle f_j\colon j\in I\rangle + (\langle b_i^{-1}\colon i\in I\rangle+
\langle h_j\colon j\in I\rangle)$ are defined.

Conversely, let the later elements be defined, then $h_{\rho^{-1}(i)} \le b_i$ and $h_{\rho^{-1}(i)}\le b_i f_{\lambda^{-1}(i)}^{-1}.$ Then $f_{\lambda^{-1}(i)}\le b_i$ and $h_{\rho^{-1}(i)}\le b_i f_{\lambda^{-1}(i)}^{-1},$ so that the elements $\langle f_j\colon j\in I\rangle + \langle b_i^{-1}\colon i\in I\rangle$ and $(\langle f_j\colon j\in I\rangle + \langle b_i^{-1}\colon i\in I\rangle)+
\langle h_j\colon j\in I\rangle$ are defined.

(ii) If we define the left and right negations in the same way as it is in Proposition \ref{pr:3.2}, we have  $x^- + x=x+x^\sim = 1.$ The uniqueness of them can be proved thanks to properties of po-groups.

(iii) Assume $x+y$ is defined. For example, let $\langle a_i^{-1}\colon i\in I\rangle + \langle f_j\colon j\in I\rangle:= \langle a_i^{-1}f_{\rho^{-1}(i)}\colon i\in I\rangle$ be defined.
If we set $h_j:= a^{-1}_{\lambda(j)}f_{\rho^{-1}(\lambda(j))} a_{\lambda (j)}$ and $b_i^{-1}:= f_{\lambda^{-1}(i)}^{-1}a_i^{-1}f_{\rho^{-1}(i)},$ we have $h_j\ge e$ and $b_i^{-1}=f_{\lambda^{-1}(i)}^{-1}a_i^{-1}f_{\rho^{-1}(i)} \le f^{-1}_{\lambda^{-1}(i)}\le e.$ Therefore,
$$
\langle h_j\colon j\in I\rangle + \langle a_i^{-1}\colon i\in I\rangle= \langle a_i^{-1}\colon i\in I\rangle + \langle f_j\colon j\in I\rangle =
\langle f_j\colon j\in I\rangle + \langle b_i^{-1}\colon i\in I\rangle.
$$

(iv) Let $1 +x$ or $x+1$ be defined, then it easy to show that $x=0.$

Summarizing (i)--(iv), we have $K^{\lambda,\rho}_I(G)_{ea}$ is a pseudo effect algebra.

In the same way as in the proof of Proposition \ref{pr:3.3}(1), we can prove that $K^{\lambda,\rho}_I(G)_{ea}$ is a symmetric pseudo effect algebra iff $\lambda = \rho.$

If $G$ is an Abelian po-group, then $K^{\lambda,\lambda}_I(G)_{ea}$ is an effect algebra. Conversely, if $K^{\lambda,\lambda}_I(G)_{ea}$ is an effect algebra, then by $(IV),$ we have $G$ is Abelian.

Finally, let $G$ be an $\ell$-group.  We note that if $G$ is an $\ell$-group, then the kite pseudo effect algebra $K^{\lambda,\rho}_I(G)_{ea}$ is obtained also as a pseudo effect algebra from the pseudo MV-algebra $K^{\lambda,\rho}_I(G)_{mv},$ and every pseudo MV-algebra satisfies RDP$_2.$
\end{proof}

The pseudo effect algebra $K_{I}^{\lambda,\rho}(G)_{ea}= ((G^+)^I \uplus (G^-)^I; \cdot, +,0,1)$ is also said to be a {\it kite}, or more precisely, a {\it kite pseudo effect algebra} of $G.$

For example, if $G$ is a po-group and $|I|=1,$ then $K_{I}^{\lambda,\lambda}(G)$ is isomorphic to the pseudo effect algebra $\Gamma(\mathbb Z\lex G,(1,0))_{ea}.$ If $|I|=n,$ then $K_{I}^{id,id}(G)_{ea}$ is isomorphic to the pseudo effect algebra  $\Gamma(\mathbb Z \lex G^n,(1,0,\ldots,0)).$

If  $G=\mathbb Z$ and $|I|=n,$ the corresponding kite can be connected with the Scrimger $\ell$-groups as we can see in the next examples.

\begin{example}\label{ex:3.5}
Let $|I|=n$ for $n\ge 2,$ and $\lambda(i)=i,$ $\rho(i) = i-1\ (\mathrm{mod}\, n).$
We put $G_n = \mathbb Z \lex (\mathbb Z^n)$ which is ordered lexicographically. We define the addition $*$ on $G_n$ as follows


$$
(m_1,x_0,\ldots,x_{n-1})*(m_2,y_0,\ldots,y_{n-1})=(m_1+m_2,x_0+y_{0+m_1},\ldots,
x_{n-1}+y_{n-1+m_1}),
$$
where addition of the subscripts is added by $\mathrm{mod}\, n.$
Then $G_n$ with $*$ is an $\ell$-group, where the inversion is given by $-(m,a_0,\ldots,a_{n-1})=(-m,-a_{-m},\ldots,-a_{n-1-m}),$ and the element $u_n=(1,0,\ldots,0)$ is a strong unit.
Then $K_{I}^{\rho,\lambda}(\mathbb Z)_{ea}$ is isomorphic to the pseudo effect algebra $\Gamma(G_n,u_n)$ with \RDP$_2.$
\end{example}

\begin{example}\label{ex:3.6}
Let $|I|=n$ for $n\ge 2,$ and $\lambda(i)=i,$ $\rho(i) = i-1\ (\mathrm{mod}\, n).$
We put $G_n = \mathbb Z \lex (\mathbb Z^n)$ which is ordered lexicographically. We define the addition $*_1$ on $G_n$ as follows


$$
(m_1,x_0,\ldots,x_{n-1})*_1(m_2,y_0,\ldots,y_{n-1})=(m_1+m_2,x_{m_2}+y_{0},\ldots,
x_{n-1+m_2}+y_{n-1}),
$$
where addition of the subscripts is added by $\mathrm{mod}\, n.$
Then $G_n$ with $*_1$ is an $\ell$-group, called the Scrimger $\ell$-group, where the inversion is given by $-(m,a_0,\ldots,a_{n-1})=(-m,-a_{-m},\ldots,-a_{n-1-m}),$ and the element $u_n=(1,0,\ldots,0)$ is a strong unit.
Then $K_{I}^{\lambda,\rho}(\mathbb Z)_{ea}$ is isomorphic to the pseudo effect algebra $\Gamma(G_n,u_n)$ with \RDP$_2.$
\end{example}

\begin{example}\label{ex:3.7}
Let $|I|=n$ for $n\ge 2,$ and $\lambda(i)=i,$ $\rho(i) = i-1\ (\mathrm{mod}\, n),$ and let $G$ be a po-group with a neutral element $e.$
We put $G_n(G) = \mathbb Z \lex (G^n)$ which is ordered lexicographically. We define the addition $*$ on $G_n(G)$ as follows

$$
(m_1,x_0,\ldots,x_{n-1})*(m_2,y_0,\ldots,y_{n-1})=(m_1+m_2,x_0y_{0+m_1},\ldots,
x_{n-1}y_{n-1+m_1}),
$$
where addition of the subscripts is added by $\mathrm{mod}\, n.$
Then $G_n(G)$ is a po-group, where the inversion is given by $(m,a_0,\ldots,a_{n-1})^{-1}=(-m,a_{-m}^{-1},\ldots,a_{n-1-m}^{-1}),$ and the element $u_n=(1,e,\ldots,e)$ is a strong unit.
Then $K_{I}^{\lambda,\rho}(G)_{ea}$ is isomorphic to the pseudo effect algebra $\Gamma(G_n(G),u_n).$
\end{example}

\begin{example}\label{ex:3.8}
Let $I=\mathbb Z$ and put $\lambda(i)=i$ and $\rho(i)=i-1,$ $i \in I.$ Then the kite pseudo effect algebra $K_\mathbb Z^{\lambda,\rho}(\mathbb Z)_{ea}$  satisfies \RDP$_2.$

Define $W(\mathbb Z):=\mathbb Z\lex \mathbb Z^\mathbb Z,$ and let  multiplication $*$ on it be defined as follows: $(m_1,x_i)*(m_2,y_i)=(m_1+m_2, x_i+y_{i+m_1}).$ Then $(W(\mathbb Z);(0),*)$ is an $\ell$-group, called the wreath product of $\mathbb Z$ by $\mathbb Z$ {\rm \cite[Ex 35.1]{Dar}}, with strong unit $u=(1,(0))$, and the kite pseudo effect algebra $K_\mathbb Z^{\lambda,\rho}(\mathbb Z)_{ea}$ is isomorphic to $\Gamma(W(\mathbb Z),u).$
\end{example}

\begin{example}\label{ex:3.9}
Let $I=\mathbb Z$ and put $\lambda(i)=i$ and $\rho(i)=i-1,$ $i \in I,$ and let $G$ be a po-group.

Define $W(G):=\mathbb Z\lex  G^\mathbb Z,$ and let multiplication $*$ on it be defined as follows: $(m_1,x_i)*(m_2,y_i)=(m_1+m_2, x_iy_{i+m_1}).$ Then $(W(G);(e),*)$ is a po-group, called the wreath product of $G$ by $\mathbb Z,$ with strong unit $u=(1,(e)),$ and the kite pseudo effect algebra $K_\mathbb Z^{\lambda,\rho}(G)_{ea}$ is isomorphic to $\Gamma(W(G),u).$
\end{example}

\begin{example}\label{ex:3.10}
Let $I=\mathbb Z,$ $k\in \mathbb Z$ be fixed, and put $\lambda(i)=i$ and $\rho_k(i)=i-k,$ $i \in I,$ and let $G$ be a po-group.

Define $W(G):=\mathbb Z\lex  G^\mathbb Z,$ and let multiplication $*_k$ on it be defined as follows: $(m_1,x_i)*_k(m_2,y_i)=(m_1+m_2, x_iy_{i+km_1}).$ Then $(W(G);(e),*_k)$ is a po-group with strong unit $u=(1,(e))$ and the kite pseudo effect algebra $K_\mathbb Z^{\lambda,\rho_k}(G)_{ea}$ is isomorphic to $\Gamma(W(G),u).$
\end{example}

\begin{example}\label{ex:3.11}
Let  $\lambda, \rho:I \to I$ be bijections such that $\lambda \circ \rho = \rho\circ \lambda.$ Let $G$ be a po-group and define $W^{\lambda,\rho}_I(G)=\mathbb Z \lex G^I$ and let multiplication $*$ on $W^{\lambda,\rho}_I(G)$ be defined as follows

$$
(m_1,x_i)*(m_2,y_i):=(m_1+m_2, x_{\lambda^{-m_2}(i)}y_{\rho^{-m_1}(i)}).
$$

Then $W^{\lambda,\rho}_I(G)$ is a po-group ordered lexicographically  such that $(0,(e))$ is the neutral element and $(m,x_i)^{-1}=(-m,x^{-1}_{(\rho\circ \lambda)^m(i)}),$ and the element $u=(1,(e))$ is a strong unit. If $G$ is an $\ell$-group, so is $W^{\lambda,\rho}_I(G).$
In addition,  $K_{I}^{\lambda,\rho}(G) \cong \Gamma(W^{\lambda,\rho}_I(G),u).$
\end{example}

The conditions of Example \ref{ex:3.11} are satisfies for example if $\rho= \lambda^k$ for some integer $k \in \mathbb Z.$

\section{Kite Pseudo Effect Algebras and the Riesz Decomposition Properties}

A very important class of pseudo effect algebras is closely tied with the Riesz Decomposition Property RDP$_1.$ In the section, we show how kite pseudo effect algebras are connected with different types of RDP's.

\begin{theorem}\label{th:4.1}
Let $I,$ $\lambda,\rho: I \to I$ be given and let $G$ be a directed po-group. The kite pseudo effect algebra $K^{\lambda,\rho}_I(G)_{ea}$ satisfies \RDP (or \RDP$_1$ or \RDP$_2$, respectively) if and only if $G$ satisfies \RDP (or \RDP$_1$ or \RDP$_2$, respectively).
\end{theorem}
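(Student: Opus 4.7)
The plan is to prove the three biconditionals (for \RDP, \RDP$_1$, \RDP$_2$) together, treating the forward and backward directions separately; in the backward direction I case-analyze the position of the four summands in $(G^+)^I \uplus (G^-)^I$.

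\emph{Forward direction.} Fix $j_0 \in I$ and embed $G^+$ into the kite as those sequences in $(G^+)^I$ whose coordinates equal $e$ outside $j_0$. By rule $(IV)$, this embedding sends the group product in $G^+$ to the partial sum $+$ in the kite. Given $f_1 f_2 = g_1 g_2$ in $G^+$, lift the four elements through the embedding; the sum rules force every part of an element of $(G^+)^I$ to lie in $(G^+)^I$, so any refinement in the kite lives in $(G^+)^I$, and since the non-$j_0$ coordinates see only the trivial identity $e = e \cdot e$, that refinement is concentrated at $j_0$. This returns a refinement in $G^+$, which by directedness ($G = G^+ - G^+$) propagates to $G$. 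The commutation condition of \RDP$_1$ and the meet condition of \RDP$_2$ transfer verbatim, because everything happens in a single coordinate.

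\emph{Backward direction.} Let $a_1 + a_2 = b_1 + b_2$ in the kite. If the common sum is in $(G^+)^I$, then rules $(I)$--$(IV)$ force all four summands to be in $(G^+)^I$ and rule $(IV)$ makes the equation coordinatewise; a coordinatewise application of $G$'s \RDP\ (respectively \RDP$_1$, \RDP$_2$) yields the refinement. Otherwise the common sum is in $(G^-)^I$ and exactly one summand on each side lies in $(G^-)^I$, yielding four sub-cases. In the representative sub-case $a_1 = \langle A_i^{-1}\rangle$, $a_2 = \langle F_j\rangle$, $b_1 = \langle B_i^{-1}\rangle$, $b_2 = \langle G_j\rangle$, rule $(II)$ rewrites the equation as $A_i B_i^{-1} = F_j G_j^{-1}$ for $i = \rho(j)$, equivalent to the common-right-factor relations $A_{\rho(j)} = F_j \tau_j$ and $B_{\rho(j)} = G_j \tau_j$ with a uniquely determined $\tau_j \in G^+$. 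The sought refinement $(c_{kl})$ necessarily has $c_{11} \in (G^-)^I$ and the other three entries in $(G^+)^I$; writing $c_{11} = \langle P_i^{-1}\rangle$ etc., the four kite equations translate to the coordinate problem of finding $P_i \in G^+$ with $A_i, B_i \le P_i$ and $P_i \le F_j B_i, G_j A_i$ solving a consistency identity derivable from $A_i B_i^{-1} = F_j G_j^{-1}$, and this is furnished by applying $G$'s \RDP\ coordinatewise to a suitable equation built from $F_j, G_j, \tau_j$. The three remaining sub-cases are handled analogously, using rule $(III)$ (which shifts by $\lambda^{-1}$) in place of rule $(II)$. The extra commutation condition of \RDP$_1$ and the meet condition of \RDP$_2$ on the refinement are inherited from the corresponding conditions on the coordinatewise refinement in $G^+$.

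The \RDP$_2$ case admits a shortcut: by \cite[Prop 4.2(ii)]{DvVe1}, a directed po-group has \RDP$_2$ iff it is an $\ell$-group, and in that case Theorem \ref{th:3.1} represents the kite as a pseudo MV-algebra, which always has \RDP$_2$; the converse follows from the forward direction. The principal obstacle is the mixed sub-cases of Case B in which one side of the equation is summed via rule $(II)$ and the other via rule $(III)$: here the coordinate shifts by $\lambda^{-1}$ and $\rho^{-1}$ interleave, the common-right-factor relations no longer align index-for-index, and constructing a single $P_i$ per kite coordinate consistent with all four kite equations requires careful bookkeeping of the two injection-induced re-indexings together with repeated use of directedness and \RDP\ of $G$.
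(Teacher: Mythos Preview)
Your overall architecture, the forward direction, and the \RDP$_2$ shortcut via $\ell$-groups all match the paper. But two steps in the backward direction are not actually carried out.

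In your representative sub-case (both $a_1,b_1\in(G^-)^I$), reducing to ``find $P_i$ with $A_i,B_i\le P_i\le F_jB_i,\,G_jA_i$'' is not sufficient when $G$ is non-commutative: once $c_{11}=\langle P_i^{-1}\rangle$ is chosen, the other three entries are forced, and the fourth kite equation $b_2=c_{12}+c_{22}$ becomes the \emph{additional} constraint $P_iA_i^{-1}B_iP_i^{-1}=B_iA_i^{-1}$, which does not follow from $A_iB_i^{-1}=F_jG_j^{-1}$ alone. The paper handles this by first invoking directedness to pick $d_i\ge A_i,B_i$, rewriting the coordinate equation as $(d_iA_i^{-1})F_j=(d_iB_i^{-1})G_j$ inside $G^+$, and applying \RDP\ in $G$ to \emph{that} equation; the four kite equations then hold automatically because they come from a genuine \RDP\ table in $G$. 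Your ``suitable equation built from $F_j,G_j,\tau_j$'' does not supply such an equation without the $d_i$.

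More seriously, you have the difficulty of the mixed sub-case exactly backwards. When one side uses rule $(II)$ and the other rule $(III)$, say $\langle a_i^{-1}\rangle+\langle f_j\rangle=\langle g_j\rangle+\langle b_i^{-1}\rangle$, no \RDP\ and no directedness are needed: the coordinate identity $g_{\lambda^{-1}(i)}^{-1}a_i^{-1}=b_i^{-1}f_{\rho^{-1}(i)}^{-1}\le e$ lets one take $c_{11}=\langle g_j\rangle$, $c_{12}=\langle g_{\lambda^{-1}(i)}^{-1}a_i^{-1}\rangle$, $c_{21}=0$, $c_{22}=\langle f_j\rangle$, and all four equations check directly (with $c_{21}=0$ the \RDP$_1$ commutation is trivial). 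Your assertion that this case is the ``principal obstacle'' requiring ``repeated use of directedness and \RDP\ of $G$'' is incorrect, and since you never actually perform the promised bookkeeping, this case is left unproved in your proposal.
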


\begin{proof}
Let $K^{\lambda,\rho}_I(G)_{ea}$ satisfy RDP and let, for $a_1,a_2,b_1,b_2 \in G^+,$ we have $a_1a_2=b_1b_2.$ Fix an element $i_0\in I$ and for $i=1,2,$ let us define $A_i=\langle f_j^i \colon j\in I\rangle$ by $f_j^i=a_i$ if $j=i_0$ and $f_j^i=e$ otherwise, $B_i= \langle g_j^i \colon j\in I\rangle$ by $g_j^i=a_i$ if $j=i_0$ and $j_j^i=e$ otherwise. Then $A_1+A_2=B_1+B_2$ so that there are $E_{11}=\langle e^{11}_j \colon j\in I\rangle,$ $E_{12}=\langle e^{12}_j \colon j\in I\rangle,$ $E_{21}=\langle e^{21}_j \colon j\in I\rangle,$ $E_{22}=\langle e^{22}_j \colon j\in I\rangle,$ such that $A_1=E_{11}+E_{12},$ $A_2=E_{21}+E_{22},$ $B_1=E_{11}+E_{21},$ and $B_2=E_{12}+E_{22}.$ Using $(IV)$ of Theorem \ref{th:3.4}, from $j=i_0$ we conclude that the elements $e_{uv}=e^{uv}_{i_0},$ $u,v=1,2,$ form a necessary decomposition for $a_1,a_2,b_1,b_2$ which proves $G$ satisfies RDP.

To prove that $G$ satisfies RDP$_1$ if $K^{\lambda,\rho}_I(G)_{ea}$ satisfies RDP$_1,$  we see that for $j\ne i_0$ we have $e^{12}_j=e=e^{21}_j.$ If now $e\le x\le e_{12}$ and $e\le y\le e_{21}$ Putting $X =\langle x_j \colon j\in I\rangle$ and $Y =\langle y_j \colon j\in I\rangle,$ where $x_j=x,$ $y_j=y$ if $j=i_0$ and $x_j=e$ and $y_j=e$ otherwise, we have $X+Y = Y+X$, consequently, $xy=yx$ which proves $e_{12}\, \mbox{\rm \bf com}\, e_{21},$ and $G$ satisfies RDP$_1.$

In the same way we prove the case with RDP$_2.$

Conversely, suppose $G$ satisfies RDP (or RDP$_1$).

(i) If $\langle f_j\colon j\in I\rangle + \langle g_j\colon j\in I\rangle= \langle h_j\colon j\in I\rangle + \langle k_j\colon j\in I\rangle$ from $(IV)$ of Theorem \ref{th:3.4} we conclude that for them we can find an RDP decomposition or an RDP$_1$ one.

(ii) Assume $\langle a_i^{-1}\colon i\in I\rangle + \langle f_j\colon j\in I\rangle= \langle b_i^{-1}\colon i\in I\rangle +\langle g_j\colon j\in I\rangle.$ Then $a_i,b_i,f_j,g_j \ge e$ for any $i,j \in I.$

Now we will use tables so that we will write the elements of the kite in a simpler way: instead of $\langle a_i^{-1}\colon i\in I\rangle$ and $\langle f_j\colon j\in I\rangle$ we use $\langle a_i^{-1}\rangle$ and $\langle f_j\rangle.$

Since $G$ is directed, for any $i\in I$, there is an element $d_i\in G$ such that $a_i,b_i \le d_i,$ so that $d_i^{-1}\le a_i^{-1},b_i^{-1},$ and $e\le d_ia^{-1}_i,d_ib_i^{-1}.$

From $(II)$ of Theorem \ref{th:3.4}, we have $a_i^{-1} f_{\rho^{-1}(i)}=b_i^{-1} g_{\rho^{-1}(i)}$ for any $i\in I.$ Then $(d_ia_i^{-1}) f_{\rho^{-1}(i)}=(db_i^{-1}) g_{\rho^{-1}(i)},$ hence using RDP for $G$, there are $c_{iuv}\in G^+,$ $u,v=1,2,$ such that we have the decomposition tables:

$$
\begin{matrix}
d_ia_i^{-1}  &\vline & c_{i11} & c_{i12}\\
f_{\rho^{-1}(i)} &\vline & c_{i21} & c_{i22}\\
  \hline     &\vline      &d_ib_i^{-1} & g_{\rho^{-1}(i)}
\end{matrix}\ \
$$
and

$$
\begin{matrix}
a_i^{-1}  &\vline & d^{-1}_ic_{i11} & c_{i12}\\
f_{\rho^{-1}(i)} &\vline & c_{i21} & c_{i22}\\
  \hline     &\vline      &b_i^{-1} & g_{\rho^{-1}(i)}
\end{matrix}\ \ .
$$
Since $e\le d_ia_i^{-1}=c_{i11}c_{i12},$ we have $a_i^{-1}c_{i12}^{-1}=d_i^{-1}c_{i11}\le e.$  This yields that we have an RDP decomposition in the kite for case (ii).

$$
\begin{matrix}
\langle a_i^{-1}\rangle  &\vline & \langle d^{-1}_ic_{i11}\rangle  & \langle c_{\rho(j)12} \rangle\\
\langle f_j \rangle &\vline & \langle c_{\rho(j)21}\rangle & \langle c_{\rho(j)22}\rangle\\
  \hline     &\vline      & \langle b_i^{-1} \rangle & \langle g_j \rangle
\end{matrix}\ \ .
$$

It is evident that if $G$ satisfies RDP$_1,$ the later table gives also an RDP$_1$ decomposition.

(iii) Assume $\langle f_j\colon j\in I\rangle + \langle a_i^{-1}\colon i\in I\rangle= \langle g_j\colon j\in I\rangle +\langle b_i^{-1}\colon i\in I\rangle.$ We follows the ideas from the proof of case (ii). By $(III)$ of Theorem \ref{th:3.4}, we have $f_{\lambda^{-1}(i)}a_i^{-1}= g_{\lambda^{-1}(i)} b_i^{-1}$ so that $f_{\lambda^{-1}(i)}(a_i^{-1}d^{-1}_i)
= g_{\lambda^{-1}(i)} (b_i^{-1}d_i^{-1}),$ where $d_i\ge a_i,b_i.$ The RDP holding in $G$ entails the decompositions

$$
\begin{matrix}
f_{\lambda^{-1}(i)}  &\vline & d_{i11} & d_{i12}\\
a_i^{-1}d_i &\vline & d_{i21} & d_{i22}\\
  \hline     &\vline      &g_{\lambda^{-1}(i)} & b_i^{-1}d_i
\end{matrix}\ \
$$
and

$$
\begin{matrix}
f_{\lambda^{-1}(i)}  &\vline & d_{i11} & d_{i12}\\
a_i^{-1} &\vline & d_{i21} & d_{i22}d_i^{-1}\\
  \hline     &\vline      &g_{\lambda^{-1}(i)} & b_i^{-1}
\end{matrix}\ \ ,
$$
where $d_{i22}d_i^{-1}\le e.$ This implies RDP decomposition for  case (iii)

$$
\begin{matrix}
\langle f_j\rangle  &\vline & \langle d_{\lambda(j)11}\rangle  & \langle d_{\lambda(j)12}\rangle\\
\langle a_i^{-1} \rangle &\vline & \langle d_{\lambda(j)21}\rangle & \langle d_{i22}d_i^{-1}\rangle\\
  \hline     &\vline      & \langle b_i^{-1} \rangle & \langle g_j \rangle
\end{matrix}\ \ .
$$

This decomposition is also an RDP$_1$ decomposition whenever $G$ satisfies RDP$_1.$

(iv) Assume $\langle a_i^{-1}\colon i\in I\rangle + \langle f_j\colon j\in I\rangle= \langle g_j\colon j\in I\rangle +\langle b_i^{-1}\colon i\in I\rangle.$

By $(II)-(III)$ of Theorem \ref{th:3.4}, we have $a_i^{-1}f_{\rho^{-1}(i)} = g_{\lambda^{-1}(i)}b_i^{-1}$ for any $i \in I.$ Therefore, $g^{-1}_{\lambda^{-1}(i)}a^{-1}_i= b_i^{-1}f^{-1}_{\rho^{-1}(i)}\le e.$

We can use the decomposition table

$$
\begin{matrix}
\langle a^{-1}_i\rangle  &\vline & \langle g_j\rangle  & \langle g^{-1}_{\lambda^{-1}(i)}a^{-1}_i\rangle\\
\langle f_j \rangle &\vline & \langle e_j\rangle & \langle f_j\rangle\\
  \hline     &\vline      & \langle g_j \rangle & \langle b_i^{-1} \rangle
\end{matrix}\ \ .
$$

It gives also an RDP$_1$ decomposition table whenever $G$ satisfies RDP$_1.$

Finally, if $G$ satisfies RDP$_2,$ by \cite[Prop 4.2(ii)]{DvVe1}, $G$ is an $\ell$-group and by Theorem \ref{th:3.1}, $K^{\lambda,\rho}_I(G)_{mv}$ is a pseudo MV-algebra, so that it satisfies RDP$_2$. Consequently, $K^{\lambda,\rho}_I(G)_{ea}$ also satisfies RDP$_2.$
\end{proof}

We remark that if the kite pseudo effect algebra of a po-group $G$ satisfies a kind of the Riesz Decomposition Property, then so does $G$ the same property. We do not know whether the converse statement is true without the assumption of directness  of $G.$

\begin{proposition}\label{pr:4.2}
A kite pseudo effect algebra $K^{\lambda,\rho}_I(G)_{ea}$ satisfies \RDP$_0$ if and only if $G$ satisfies \RDP$_0.$
\end{proposition}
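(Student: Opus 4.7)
The plan is to mirror the structure of the proof of Theorem \ref{th:4.1}, but the one-sided nature of \RDP$_0$ makes one direction essentially free and makes the other direction dramatically simpler, so that directedness of $G$ is not needed at all.

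For the forward direction, I would embed a single $G$-instance into the bottom of the kite using one fixed coordinate. Given $a,b,c\in G^+$ with $a\le bc$, fix $i_0\in I$ and define $A,B,C\in (G^+)^I$ by placing $a,b,c$ at coordinate $i_0$ and the identity $e$ elsewhere. Definition $(IV)$ of Theorem \ref{th:3.4} makes $B+C$ coordinatewise multiplication, so $A\le B+C$. Applying \RDP$_0$ in the kite yields $B_1\le B$ and $C_1\le C$ with $A=B_1+C_1$; since $B,C\in (G^+)^I$ and anything $\le$ an element of $(G^+)^I$ is again in $(G^+)^I$, the refiners $B_1,C_1$ lie in $(G^+)^I$ and the equation is coordinatewise. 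Reading off coordinate $i_0$ produces the desired $b_1,c_1\in G^+$.

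For the converse, assume $G$ satisfies \RDP$_0$ and take $X,Y,Z$ in the kite with $X\le Y+Z$. I would split on the location of $Y,Z$; the case $Y,Z\in(G^-)^I$ is excluded by $(I)$. If $Y,Z\in(G^+)^I$ then $Y+Z\in(G^+)^I$ so $X\in(G^+)^I$, and by $(IV)$ the inequality is coordinatewise: apply \RDP$_0$ in $G^+$ coordinate-by-coordinate to split each $x_j=y'_jz'_j$ with $y'_j\le y_j$, $z'_j\le z_j$, and assemble the refinements. In the remaining configurations, $Y+Z\in(G^-)^I$. If $X\in(G^+)^I$, choose the trivial refinement that puts everything on the summand lying in $(G^+)^I$: take $Y_1=0,Z_1=X$ when $Z\in(G^-)^I$, and $Y_1=X,Z_1=0$ when $Y\in(G^-)^I$; the inequalities $Y_1\le Y$ and $Z_1\le Z$ then hold automatically because $(G^+)^I$ sits below $(G^-)^I$.

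The only genuinely delicate case is $X\in(G^-)^I$ with $Y$ and $Z$ on opposite sides. Writing $X=\langle b_i^{-1}\rangle$, $Y=\langle f_j\rangle\in(G^+)^I$, $Z=\langle a_i^{-1}\rangle\in(G^-)^I$, I would use a \emph{greedy} refinement: keep $Y_1:=Y$ and read off $Z_1:=\langle (b_if_{\lambda^{-1}(i)})^{-1}\rangle$ from $Y+Z_1=X$ via $(III)$. Spelling out $X\le Y+Z$ through $(III)$ gives exactly $a_i\le b_if_{\lambda^{-1}(i)}$, which supplies both the definedness of $Z_1$ and the inequality $Z_1\le Z$; the symmetric subcase with $Y\in(G^-)^I,\ Z\in(G^+)^I$ is handled dually via $(II)$.

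The main obstacle I anticipate is precisely this last case: a symmetric attempt to split \emph{both} $Y$ and $Z$ nontrivially would require bounding an expression of the form $b^{-1}_ia_i$ from below by $e$ in the positive cone, which in a general po-group needs a join and hence directedness (this is why Theorem \ref{th:4.1} assumes $G$ is directed). The point is that \RDP$_0$, being one-sided, lets one dump all of $X$ onto a single summand and leave the other at its maximum, so no such join is ever required, and the assumption of directedness can be dropped.
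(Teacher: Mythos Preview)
Your proof is correct and follows the same case analysis as the paper. Two small remarks are in order.

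First, a verbal slip: when $X\in(G^+)^I$ and $Y,Z$ lie on opposite sides, your formulas ($Y_1=0,\,Z_1=X$ when $Z\in(G^-)^I$; $Y_1=X,\,Z_1=0$ when $Y\in(G^-)^I$) actually put all of $X$ on the summand lying in $(G^-)^I$, not $(G^+)^I$. The formulas are right; the sentence describing them is backwards.

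Second, in the mixed case with $X\in(G^-)^I$ your greedy move---keeping one summand intact and solving for the other---is slightly cleaner than what the paper does. The paper, for $\langle a_i^{-1}\rangle\le\langle b_i^{-1}\rangle+\langle f_j\rangle$, rewrites the inequality as $b_i\le f_{\rho^{-1}(i)}a_i$ and then \emph{applies \RDP$_0$ in $G$} to split $b_i=f_{\rho^{-1}(i)1}a_{i1}$, producing a genuinely two-sided refinement. Your approach avoids invoking \RDP$_0$ here altogether: you only need it in the purely positive case $(IV)$. Both are valid; yours shows that \RDP$_0$ in $G$ is used in exactly one place. Your closing comment about directedness being forced by a symmetric split is therefore slightly off target: the paper's own handling of this case, while it does split both summands nontrivially, still does not require directedness or joins---directedness in Theorem~\ref{th:4.1} enters only because \RDP\ demands a four-element table rather than a one-sided bound.
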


\begin{proof}
Using $(IV)$ of Theorem \ref{th:3.4}, it is evident that if $K^{\lambda,\rho}_I(G)_{ea}$ satisfies RDP$_0,$ so does $G.$

Conversely, let $G$ is with RDP$_0.$

(i) Let $\langle f_j\colon j\in I\rangle \le \langle g_j\colon j\in I\rangle + \langle h_j\colon j\in I\rangle= \langle g_jh_j\colon j\in I\rangle$ which implies that for each $j\in I,$ there are $g_{j1},h_{j1} \in G^+$ such that $g_{j1}\le g_j,$  $h_{j1}\le h_j$ and $f_j= g_{j1}h_{j1}.$ Then $\langle f_j\colon j\in I\rangle = \langle g_{j1}\colon j\in I\rangle + \langle h_{j1}\colon j\in I\rangle$ and $\langle g_{j1}\colon j\in I\rangle \le \langle g_{j}\colon j\in I\rangle$ and $\langle h_{j1}\colon j\in I\rangle \le \langle h_{j}\colon j\in I\rangle.$

(ii) Let $\langle f_j\colon j\in I\rangle \le \langle a_i^{-1}\colon i\in I\rangle + \langle g_j\colon j\in I\rangle.$ Then $\langle f_j\colon j\in I\rangle= \langle f_j\colon j\in I\rangle + \langle e_j\colon j\in I\rangle,$ and $\langle f_j\colon j\in I\rangle \le \langle a_i^{-1}\colon i\in I\rangle$ and $\langle e_j\colon j\in I\rangle \le \langle g_j\colon j\in I\rangle.$

In the analogous way we can prove the case $\langle f_j\colon j\in I\rangle \le   \langle g_j\colon j\in I\rangle +\langle a_i^{-1}\colon i\in I\rangle.$

(iii) Let $\langle a_i^{-1}\colon i\in I\rangle \le \langle b_i^{-1}\colon i\in I\rangle + \langle f_j\colon j\in I\rangle.$
For each $i \in I,$ we have $a_i^{-1} \le b_i^{-1}f_{\rho^{-1}(i)}$ and $b_i \le f_{\rho^{-1}(i)}a_i$.
Using RDP$_0$ for $G,$ we can find positive elements $a_{i1}\le a_i$ and $f_{\rho^{-1}(i)1}\le f_{\rho^{-1}(i)}$ such that $b_i=f_{\rho^{-1}(i)1} a_{i1}.$ Then we have $\langle a_i^{-1}\colon i\in I\rangle = \langle a_i^{-1}f^{-1}_{\rho^{-1}(i)1} \colon i\in I\rangle
+ \langle f_{j1} \colon j\in I\rangle,$ and $\langle a_i^{-1}f^{-1}_{\rho^{-1}(i)1} \colon i\in I\rangle \le \langle b_i^{-1}\colon i\in I\rangle$ and $\langle f_{j1} \colon j\in I\rangle \le \langle f_{j} \colon j\in I\rangle.$

In a dual way, we proceed also with the case $\langle a_i^{-1}\colon i\in I\rangle \le   \langle f_j\colon j\in I\rangle + \langle b_i^{-1}\colon i\in I\rangle.$

Summing up all cases, we have that $K^{\lambda,\rho}_I(G)_{ea}$ satisfies RDP$_0.$
\end{proof}

Examples \ref{ex:3.5}--\ref{ex:3.10} describe concrete po-group representations  corresponding to some kite pseudo effect algebras. Therefore, in view of Theorem \ref{th:2.2}, Theorem \ref{th:4.1} and Theorem \ref{th:3.1} it would be also interesting to study the following: given a po-group $G,$ a set $I$ and bijections $\lambda,\rho:I \to I,$ describe the corresponding unital po-group $(G_I^{\lambda,\rho},u)$ such that the kite pseudo effect algebra $K^{\lambda,\rho}_I(G)_{ea}$ of $G$ is isomorphic to $\Gamma(G_I^{\lambda,\rho},u).$ Some partial answers will be done in Theorem \ref{th:5.3} and in Section 6.

\section{Kites and Perfect Pseudo Effect Algebras}

A perfect algebra is characterized roughly speaking by the property that every element is either infinitesimal or co-infinitesimal. This property have also kite pseudo effect algebras, therefore, we study perfect pseudo effect in more detail with their representation.

First we remind some additional notions of theory of pseudo effect algebras. Let $x$ be an element and $n\ge 0$ be an integer. We define $0x:=0,$ $1x:=x,$ and $(n+1)x:= nx +x$ whenever $nx$ and $nx +x$ are defined in $E.$ An element $x\in E$ is said to be {\it infinitesimal} if $nx$ exists in $E$ for any integer $n \ge 1.$ We denote by $\Infinit(E)$ the set of infinitesimal elements of $E.$

If $A,B$ are two subsets of a pseudo effect algebra $E,$ $A+B:=\{a+b: a \in A, b \in B, a+b \in E\},$ and we say that $A+B$ is {\it defined} in $E$ if $a+b$ exists in $E$ for each $a \in A$ and each $b \in B.$ We write $A\leqslant B$ whenever $a\le b$ for all $a\in A$ and all $b \in B.$ In addition, we write $A^-:=\{a^-: a \in A\}$ and $A^\sim :=\{a^\sim : a \in A\}.$

We note that an {\it ideal} of a pseudo effect algebra $E$ is any subset $I$ of $E$ such that (i) if $x,y \in I$ and $x+y$ is defined in $E,$ then $x+y \in I,$ and (ii) $x\le y \in I$ implies $x\in I.$ An ideal $I$ is {\it maximal} if it is a proper subset of $E$ and it is not a proper subset of any ideal $J\ne E.$ An ideal $I$ is {\it normal} if $x+I=I+x$ for any $x \in E,$ where $x+I:=\{x+y: y \in I,\ x+y $ exists in $E\}$ and in the dual way  we define $I+x.$

An analogue of a probability measure is a state. We say that a mapping $s:E \to[0,1]$ is a {\it state} if (i) $s(a+b)=s(a)+s(b)$ whenever $a+b$ is defined in $E,$ and (ii) $s(1).$ A state $s$ is {\it extremal} if from $s=\alpha s_1+(1-\alpha) s_2,$ where $s_1,s_2$ are states on $E$ and $\alpha \in (0,1),$ we conclude $s=s_1=s_2.$ We denote by $\mathcal S(E)$ and $\partial_e \mathcal S(E)$ the set of all states and extremal states, respectively, on $E.$ It can happen that $\mathcal S(E)$ is empty. In general, $\mathcal S(E)$ is either empty, or a singleton or an infinite set. We remind that if $E$ is an effect algebra with RDP, then $E$ has at least one state.

If $s$ is a state on $E,$ then the {\it kernel} of $s,$ i.e. the set $\Ker(s):=\{a\in E: s(a)=0\},$ is a normal ideal of $E.$

We say that a pseudo effect algebra $E$ is {\it perfect} if there are two subsets $E_0,E_1$ of $E$ such that $E_0\cap E_1=\emptyset$ and $E=E_0\cup E_1$ such that
\begin{enumerate}
\item[(a)] $E_i^- =E_i^\sim = E_{1-i},$ $i=0,1,$
\item[(b)] if $x \in E_i,$ $b\in E_j$ and $x+y$ is defined in $E,$ then $i+j\le 1$ and $x+y \in E_{i+j}$ for $i,j=0,1,$
\item[(c)]    $E_0+ E_0$ is defined in $E.$
\end{enumerate}
In such a case, we write $E=(E_0,E_1).$

For example, let $G$ be a po-group, then $E=\Gamma(\mathbb Z \lex G,(1,0))$ is a symmetric perfect pseudo effect algebra. In addition, every kite pseudo effect algebra of $G$ is perfect, indeed, $K^{\lambda,\rho}_I(G)_{ea}=((G^+)^I,(G^-)^I).$

\begin{proposition}\label{pr:5.1}
Let $E=(E_0,E_1)$ be a perfect pseudo effect algebra. Then
\begin{enumerate}
\item[{\rm (i)}] $E_0 \leqslant E_1.$
\item[{\rm (ii)}] $E_0+E_0$ is defined and $E_0+E_0=E_0.$
\item[{\rm (iii)}] $E$ has a unique state $s$, namely $s(E_0)=\{0\}$ and $s(E_1)=\{1\}.$
\item[{\rm (iv)}] $E_0$ is a unique maximal ideal of $E,$ and $E_0$ is a normal ideal.
\item[{\rm (v)}] $E_0=\Infinit(E).$
\item[{\rm (vi)}] If $i+j >1,$ then neither $x+y$ nor $y+x$ are defined in $E.$

\item[{\rm (vii)}] $E=(F_0,F_1),$ then $E_0=F_0$ and $E_1=F_1.$
\end{enumerate}
\end{proposition}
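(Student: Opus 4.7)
The plan is to establish the items in an order that lets each rely on its predecessors: first (vi), then (i), (ii), (v), (iii), (iv), and finally (vii). Item (vi) is essentially just the contrapositive of condition (b): if $x\in E_i$, $y\in E_j$ and $x+y$ were defined, then (b) forces $i+j\le 1$; the $y+x$ case is the same with $i$ and $j$ swapped.

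For (i), take $x\in E_0$ and $y\in E_1$. By (a), $y^-\in E_0$, and by (c), $x+y^-$ is defined. Using the standard fact in pseudo effect algebras that $x+y^-$ is defined iff $x\le (y^-)^\sim=y$, we conclude $x\le y$. For (ii), note first that $0\in E_0$ (since $0+0=0$ forces, via (b), that $0$ belongs to $E_0$ rather than $E_1$), so $E_0=\{0+x:x\in E_0\}\subseteq E_0+E_0$, while (c) and (b) give $E_0+E_0\subseteq E_0$. Item (v) follows by a one-line induction: for $x\in E_0$, (c) and (b) let us iteratively form $(n+1)x=nx+x\in E_0$; conversely, if $x\in E_1$ then $2x$ cannot be defined by (vi) (with $i=j=1$), so no element of $E_1$ is infinitesimal.

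For (iii), define $s_0$ by $s_0(E_0)=\{0\}$ and $s_0(E_1)=\{1\}$. Additivity is checked by cases using (b): if both summands lie in $E_0$, the sum is in $E_0$; if one lies in $E_0$ and the other in $E_1$, the sum is in $E_1$; the remaining case ($i=j=1$) never arises by (vi). To show uniqueness, let $s$ be any state. For $x\in E_0$, $nx$ is defined for all $n$ by (v), whence $ns(x)=s(nx)\le 1$ forces $s(x)=0$; for $y\in E_1$ use $y^-+y=1$ with $y^-\in E_0$ to get $s(y)=1$. For (iv), $E_0$ is downward closed because $x\le y\in E_0$ with $x\in E_1$ would give $y\le x$ by (i), forcing $x=y$ in contradiction with $E_0\cap E_1=\emptyset$; closure under $+$ is (ii); properness follows from $1\in E_1$. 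If $J$ is any proper ideal containing some $y\in E_1$, then (i) and downward closure give $E_0\subseteq J$, so $y^-\in J$; hence $1=y^-+y\in J$, a contradiction. Thus every proper ideal is contained in $E_0$, proving both maximality and uniqueness. Normality of $E_0$ is immediate from $E_0=\Ker(s_0)$ and the fact stated earlier that kernels of states are normal ideals.

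Finally, (vii) is a direct consequence of (iv): if $E=(F_0,F_1)$ is another such decomposition, then $F_0$ is a maximal ideal by applying (iv) to the decomposition $(F_0,F_1)$, so uniqueness in (iv) gives $F_0=E_0$, and then $F_1=F_0^-=E_0^-=E_1$ by (a). The only real subtlety is the argument for (i), which requires the standard characterization of when $a+b$ is defined in terms of the two negations; everything else is essentially bookkeeping from the three axioms (a)--(c) of a perfect pseudo effect algebra.
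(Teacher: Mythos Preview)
Your proof is correct and follows essentially the same approach as the paper's, differing only in bookkeeping: you front-load (vi) and use it to handle the converse direction of (v) directly (the paper instead appeals to the state), and you derive (vii) from the uniqueness of the maximal ideal rather than from the uniqueness of the state. These are cosmetic reorderings of the same argument.
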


\begin{proof}
(i) We have $E_0^- = E^\sim =E_1$ and $E_1^- = E_1^\sim = E_0.$ Let $x \in E_0$ and $y \in E_1.$  Then $y^\sim \in E_0$ and $y^\sim+x $ is defined in $E,$ so that $x \le y^{\sim-}=y.$

(ii) We have $E_0+E_0$ is defined in $E$ and $E_0+E_0 \subseteq E_0.$ Let now $x \in E_0.$ Then $x= x+0 \in E_0+E_0$ which entails $E_0+E_0=E_0.$

(iii) We define a mapping $s:E \to \{0,1\}$ by $s(x)=i$ whenever $x\in E_i,$ $i=0,1.$ It is clear that $s(1)=1.$ If $a +b$ is defined in $E,$ assume
(a) $a,b \in E_0$ then $=s(a+b)=s(a)+s(b).$ (b) $a\in E_0,$ $b \in E_1,$ then $a+b\in E_1$ so that $1=s(a+b)=s(a)+s(b).$ Similarly if $a\in E_1$ and $b\in E_0,$ then $a+b \in E_1.$ (c) If $a,b \in E_1,$ then $a+b$ is not defined in $E.$

Uniqueness. Let $s_1$ be any state on $E.$ If $a\in E_0,$ then $na$ is defined in $E$ for any integer $n\ge 1.$ Then $s_1(na)=ns_1(a)\le 1,$ so that $s_1(a)\le 1/n$ i.e., $s_1(a)=0.$ Hence, if $b\in E_1$ then $b^-\in E_0$ and $1=s_1(b^-)+s_1(b) = s_1(b),$ which entails $s=s_1.$

(iv) It is clear that $E_0$ is an ideal of $E.$ If $E_0$ is contained in an ideal $I\ne E,$ and if $E_0\ne I,$ there is an element $y \in E_1\cap I.$ Hence, $y^- \in E_0$ and $1=y^-+y \in I$ which is absurd. Hence, $E_0$ is a maximal ideal of $E.$

Normality. By (iii), there is a unique state $s$ on $E$ such that $\Ker(s)=E_0$ which implies $E_0$ is a normal ideal.

(v) By (ii), we have $E_0\subseteq \Infinit(E).$ If $s$ is a unique two-valued state on $E$ guaranteed by (iii), then $a\in \Infinit(E)$ implies $s(a)=0,$ i.e. $a \in \Ker(s)=E_0.$

(vi) It was already proved in the proof of (iii)

(vii) If $E= (F_0,F_1),$ by (iii), $E$ has a unique two-valued state $s$ such that $\Ker(s)=E_0$ and hence, also $\Ker(s)=F_0.$
\end{proof}

\begin{theorem}\label{th:5.2}
Let $E$ be a perfect symmetric pseudo effect algebra with \RDP$_1.$ There is a directed po-group $G$ with \RDP$_1$ such that $E\cong \Gamma(\mathbb Z\lex G,(1,0)).$ The po-group $G$ is unique up to isomorphism.
\end{theorem}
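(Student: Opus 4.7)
My plan is to reduce to Theorem \ref{th:2.2} and then realize the associated unital po-group as a lexicographic product. First, Theorem \ref{th:2.2} supplies a unital po-group $(H,u)$ with RDP$_1$, unique up to isomorphism, such that $E\cong\Gamma(H,u)$. Symmetry of $E$ means $u-a=-a+u$ for every $a\in E$; since $H^+$ is generated additively by $E$ (strong unit plus RDP$_1$), this forces $u\in C(H)$. Proposition \ref{pr:5.1}(iii) provides the unique state $s\colon E\to\{0,1\}$ with $s^{-1}(0)=E_0$; it extends uniquely to a state $\tilde s\colon (H,u)\to\mathbb R$, and the RDP decomposition of each $h\in H^+$ as a sum $e_1+\cdots+e_N$ of elements of $E$ forces $\tilde s(H)\subseteq\mathbb Z$. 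Setting $G:=\ker\tilde s$ with the induced order, centrality of $u$ makes
\[
\Phi\colon H\to\mathbb Z\lex G,\qquad\Phi(h):=(\tilde s(h),\,h-\tilde s(h)u),
\]
a group isomorphism with $\Phi(u)=(1,0)$ and inverse $(n,g)\mapsto nu+g$.

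The principal obstacle is showing $\Phi$ is an order isomorphism onto the lex cone. The inclusion $\Phi(H^+)\subseteq(\mathbb Z\lex G)^+$ is immediate from monotonicity of $\tilde s$; the reverse inclusion reduces, via iterated addition of $u$, to proving $u+g\ge 0$ in $H$ for every $g\in G$. I first observe $G^+=E_0$: for $h\in G^+$, RDP gives $h=e_1+\cdots+e_N$ with $e_i\in E$, the state forces each $e_i\in E_0$, and Proposition \ref{pr:5.1}(ii) yields $h\in E_0$. To prove $G=E_0-E_0$, pick $g\in G$ and $l\ge 0$ with $-lu\le g\le lu$, decompose $g+lu=e_1+\cdots+e_{2l}$ by RDP (exactly $l$ letters lie in $E_1$), write each such $e_i$ as $u-a_i$ with $a_i\in E_0$, and use centrality of $u$ to extract the $u$'s; this produces $g$ as a word $\hat e_1+\cdots+\hat e_{2l}$ of letters in $E_0\cup(-E_0)$. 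The reshuffling step replaces any adjacent pair $-a+b$ (with $a,b\in E_0$) by $b+(-e)$ with $e\in E_0$: pseudo effect algebra axiom (iii) applied to the sum $a+b$, which is defined since $E_0+E_0=E_0$, gives $a+b=b+e$ for some $e\in E$, hence $-a+b=b-e$ in $H$, and a state computation forces $e\in E_0$. Finitely many such swaps plus closure $E_0+E_0=E_0$ yield $g=g_1-g_2$ with $g_1,g_2\in E_0$; then centrality of $u$ gives $u+g=g_1+(u-g_2)=g_1+g_2^-\in H^+$. Directedness of $G$ and RDP$_1$ for $G$ follow at once from $G=E_0-E_0$, down-closure of $E_0$ (itself a state-and-state-range argument), and RDP$_1$ for $H$.

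For uniqueness, suppose also $E\cong\Gamma(\mathbb Z\lex G',(1,0))$ for some directed po-group $G'$ with RDP$_1$. The uniqueness clause of Theorem \ref{th:2.2} provides a unital po-group isomorphism $\mathbb Z\lex G\cong\mathbb Z\lex G'$ sending $(1,0)\mapsto(1,0)$; on each side the kernel of the unique state, which coincides with $G$ and $G'$ respectively, is carried isomorphically, so $G\cong G'$ as po-groups.
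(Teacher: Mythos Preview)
Your argument is correct and takes a genuinely different route from the paper's. The paper builds $G$ abstractly via Birkhoff's theorem as the po-group with positive cone the cancellative monoid $(E_0;+,0)$, realizes $G$ inside the representing group $H$, and then defines an isomorphism directly at the level of the pseudo effect algebra by $\phi(x)=(i,x-iv)$ for $x\in E_i$, checking by hand that $\phi$ preserves $+$, the two negations, and is bijective. You instead work entirely in the ambient unital group $H$: you prove $u\in C(H)$ from symmetry, extend the unique two-valued state to an integer-valued group homomorphism $\tilde s$, set $G=\ker\tilde s$, and show that $\Phi(h)=(\tilde s(h),h-\tilde s(h)u)$ is an isomorphism of \emph{ordered} groups onto $\mathbb Z\lex G$, the substantive step being that the $H$-order agrees with the lex order, which you reduce to $u+g\ge 0$ for every $g\in G$ via the reshuffling trick using axiom~(iii) and $E_0+E_0=E_0$. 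The paper's proof is shorter and avoids the state-extension machinery, but it leaves the uniqueness clause unproved and the representing unital group only implicit; your approach is more structural---it identifies $(H,u)$ itself as $(\mathbb Z\lex G,(1,0))$---and your uniqueness argument via Theorem~\ref{th:2.2} and the kernel of the unique state is clean. The one point you should make explicit is that the extension of $s$ to a positive group homomorphism $H\to\mathbb R$ is legitimate in the non-Abelian \RDP$_1$ setting: this follows from the \RDP\ refinement of any two $E$-decompositions of an element of $H^+$ together with conjugation-invariance of $H^+$, but since the paper does not state it, a sentence of justification is warranted.
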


\begin{proof}
In the present proof we will assume that used po-groups are written additively.

Since $E=(E_0,E_1)$ has RDP$_1$, by Theorem \ref{th:2.2}, there is a unique (up to isomorphism of unital po-groups) unital po-group $(H,v)$ such that $E=\Gamma(H,v).$ By Proposition \ref{pr:5.1}(ii), we see that  $(E_0;+,0)$ is a semigroup such that,  for $x,y, z \in E_0,$ (i) $x+y=0$ implies $x=0=y,$ (ii) $x+y=x+z$ implies $y=z$ and $y+x=z+x$ implies $y=z$. Then $(E_{0};+,0)$  is a cancellative semigroup satisfying the conditions of the Birkhoff Theorem, \cite[Thm II.4]{Fuc}, which guarantees that $E_{0}$ is the positive
cone of a unique (up to isomorphism) po-group $G$. Without loss of generality, we can assume that $G$ is generated by the positive cone $E_0,$ so that $G$ is directed. In addition, since $E$ is with RDP$_1$, and $G^+=E_0,$ it is clear that $G$ satisfies RDP$_1.$

Since $E_0$  is a subset of a po-group $H,$ we can assume that $G$ is a subgroup of $H.$

Define a symmetric perfect pseudo effect algebra $\mathbb Z(G):=\Gamma(\mathbb Z \lex G,(1,0))$ and let $\phi$ be a mapping from $E$ into $\mathbb Z(G)$ defined by
$$
\phi(x):=(i,x-i1)
$$
whenever $x \in E_i,$ $i=0,1.$ We note ``$-$" in the formula means the group subtraction taken in the po-group $H.$ It is evident that $\phi$ is a well-defined mapping.

We have $\phi(0)=(0,0),$ $\phi(1)=(1,0).$ If $x \in E_0,$ then $x^-\in E_1$ so that $\phi(x^-)=(1,x^- -1)=(1, 1-x+1)=(1,-x+1-1)=(1,-x)=\phi(x)^-$. If $x\in E_1,$ then $\phi(x^-)=(0,x^-)=(0,-x+1)=\phi(x)^-.$ In a similar way we can prove that $\phi$ is a homomorphism: For example, let $x\in E_0,$ $y\in E_1$ and let $x+y$ be defined in $E$. Then $y\le x^-$ and $\phi(y)= (1,y-1)  \le (1,x^--1) =\phi(x^-)=\phi(x)^-.$ So that $\phi(x+y)=(1,x+y-1)= (0,x)+(1,y-1)= \phi(x)+\phi(y).$

Assume $\phi(x)\le \phi(y)$ for $x \in E_i$ and $y \in E_j$ for some $i,j =0,1.$ Then $\phi(x)=(i,x-i1)\le \phi(y)= (j,y-j1).$ Then either $i<j,$ so that $x<y$ or $i=j$ and then $x-i1\le y-i1$ which gets $x \le y$ and $\phi$ is injective.

Now let $(i,g) \in \mathbb Z(G).$ Then either $(i,g)=(0,g),$ where $g\in G^+=E_0,$ so that $\phi(g)=(0,g)$ or $(i,g)=(1,-h)$ for some $h\in G^+,$ so that $\phi(h^-)=(i,g),$ and $\phi$ is surjective.

Consequently, $\phi$ is an isomorphism in question.
\end{proof}

Now we apply Theorem \ref{th:5.3} to kite pseudo effect algebras.

\begin{theorem}\label{th:5.3}
Let a set $I,$ a bijection $\lambda: I \to I$ and  a directed po-group $G$ with \RDP$_1$ be given. There is a unique (up to isomorphism) directed po-group $G^\lambda_I$ with \RDP$_1$ such that
the kite pseudo effect algebra $K^{\lambda,\lambda}_I(G)_{ea}$ is isomorphic to $\Gamma(\mathbb Z\lex G^\lambda_I,(1,0)).$
\end{theorem}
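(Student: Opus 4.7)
The plan is simply to invoke Theorem \ref{th:5.2} on the kite $K^{\lambda,\lambda}_I(G)_{ea}$; all the substantive work has already been done elsewhere in the paper, and the proof reduces to verifying that this particular kite meets the three hypotheses of that theorem, namely perfectness, symmetry, and \RDP$_1,$ and then reading off $G^\lambda_I$ from its conclusion.

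First I would verify perfectness by exhibiting the partition $E_0 := (G^+)^I$, $E_1 := (G^-)^I$ (already noted in the paragraph preceding Proposition \ref{pr:5.1}). Conditions (a)--(c) of the definition are read directly from the addition rules $(I)$--$(IV)$ in Theorem \ref{th:3.4}: rule $(I)$ forbids $E_1 + E_1$; rules $(II)$--$(III)$ send $E_0 + E_1$ and $E_1 + E_0$ into $E_1$; rule $(IV)$ shows $E_0 + E_0$ is always defined and lies in $E_0$; and Proposition \ref{pr:3.2} yields $E_i^- = E_i^\sim = E_{1-i}.$ Symmetry is immediate from the hypothesis $\rho = \lambda$ via the corresponding clause of Theorem \ref{th:3.4}. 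Finally, \RDP$_1$ transfers from $G$ (which is directed and has \RDP$_1$ by assumption) to the kite by Theorem \ref{th:4.1}.

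With these three properties in hand, Theorem \ref{th:5.2} applied to $E = K^{\lambda,\lambda}_I(G)_{ea}$ produces a directed po-group $H$ with \RDP$_1$, unique up to isomorphism, together with an isomorphism $K^{\lambda,\lambda}_I(G)_{ea} \cong \Gamma(\mathbb{Z} \lex H, (1,0));$ setting $G^\lambda_I := H$ completes the proof, and uniqueness of $G^\lambda_I$ is inherited from the uniqueness clause of Theorem \ref{th:5.2}. The only genuine difficulty lies upstream in Theorem \ref{th:5.2} itself, where $G^\lambda_I$ is constructed by extending the cancellative semigroup $(E_0;+,0)$ to a po-group via the Birkhoff theorem; here that construction specializes to recovering the group from the componentwise addition on $(G^+)^I$, so nothing new needs to be built.
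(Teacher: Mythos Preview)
Your proposal is correct and follows essentially the same route as the paper's proof: verify that $K^{\lambda,\lambda}_I(G)_{ea}$ is perfect, symmetric (via Theorem \ref{th:3.4}), and has \RDP$_1$ (via Theorem \ref{th:4.1}), then invoke Theorem \ref{th:5.2}. The paper is slightly terser about perfectness, simply citing the observation made before Proposition \ref{pr:5.1}, whereas you spell out conditions (a)--(c) from rules $(I)$--$(IV)$; otherwise the arguments are identical.
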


\begin{proof}
By Theorem \ref{th:3.4}, any kite pseudo effect algebra $K^{\lambda,\rho}_I(G)_{ea}$ is symmetric iff $\lambda =\rho.$ It is evident that every kite pseudo effect algebra is perfect.
Since $G$ is with RDP$_1,$ by Theorem \ref{th:4.1}, the kite $K^{\lambda,\lambda}_I(G)_{ea}$ has also RDP$_1.$ Consequently, all conditions of Theorem \ref{th:5.2} are satisfied, thus there is a unique (up to isomorphism) directed po-group $G^\lambda_I$ with RDP$_1$ such that
$K^{\lambda,\lambda}_I(G)_{ea}\cong \Gamma(\mathbb Z\lex G^\lambda_I,(1,0)).$
\end{proof}

Finally, we would like to note that it would be interesting to prove Theorem \ref{th:5.2} also for perfect pseudo effect algebras which are not necessarily symmetric.

\section{Least Non-trivial Normal Ideals of Kites}

We are studying pseudo effect algebras which are partial algebras, therefore, some notions of universal algebra are not easy to study for pseudo effect algebras. If the pseudo effect algebra is a pseudo MV-algebra, then there is a one-to-one correspondence  between normal ideals and congruences, \cite{GeIo}. For pseudo effect algebras it is not necessarily a case. Therefore, instead of subdirect irreducibility we will study the relation between the least non-trivial (i.e. $\ne \{0\}$) normal ideals of kite pseudo effect algebras and the least non-trivial (i.e. $\ne \{e\}$) o-ideals of the corresponding po-groups. We will be inspired by research from \cite{DvKo}, where subdirect irreducibility of kites connected with  subdirect irreducibility of $\ell$-groups was studied.

We note that a pseudo effect algebra $E$ is a {\it subdirect product} of a system of pseudo effect algebra $(E_t: t \in T)$ if there is an injective homomorphism $h: E\to \prod_{t\in T}E_t$ such that $\pi_t(h(E))=E_t$ for each $t \in T,$ where $\pi_t$ is the projection of $\prod_{t \in T} E_t$ onto $E_t.$ In addition, $E$ is {\it subdirect irreducible} if whenever $E$ is a subdirect product of $(E_t: t \in T),$ there exists $t_0 \in T$ such that $\pi_{t_0}\circ h$ is an isomorphism of pseudo effect algebras.

If $M$ is a pseudo MV-algebra, it is known, \cite{GeIo}, that there is a one-to-one correspondence between congruences and normal ideals. For pseudo effect algebras, this relation is more complicated, and it is true for pseudo effect algebras with RDP$_1$, \cite{DvVe3} or for Riesz ideals of general pseudo effect algebras. For pseudo MV-algebras, we have that a pseudo MV-algebra $M$ is subdirectly irreducible iff the intersection of all non-trivial normal ideals of $M$ is non-trivial, or equivalently, $M$ has the least non-trivial normal ideal.

We remind that by an {\it o-ideal} of a directed po-group $G$ we understand any normal directed convex subgroup $H$ of $G;$ convexity means if $g,h\in H$ and $v\in G$ such that $g\le v\le h,$ then $v \in H.$ If $G$ is a po-group, so is $G/H,$ where $x/H \le y/H$ iff $x\le h_1y$ for some $h_1\in H$ iff $x \le yh_2$ for some $h_2 \in H.$
If $G$ satisfies one of RDP's, then $G/H$ satisfies the same RDP, \cite[Prop 6.1]{174}. Therefore, in this section, we will study relations between the least non-trivial normal ideals of kite pseudo effect algebras and the least non-trivial o-ideals of the corresponding po-groups.

If $E$ satisfies RDP$_1,$ there is a one-to-one correspondence between congruences and ideals of $E,$ \cite{185, DvVe3}, and given a normal ideal $I$, the relation $\sim_I$ on $E$ is given by $a\sim_I b$ iff there are $x,y \in I$ with $x\le a$ and $y\le b$ such that $a\minusli x = b\minusli y,$ then the quotient  $E/I$ is a pseudo effect algebra with RDP$_1.$

We note that if $E=\Gamma(G,u)$ for some unital po-group with RDP$_1,$ there is a one-to-one correspondence between ideals of $E$ and o-ideals in $G,$ \cite{185}.

\begin{proposition}\label{pr:6.1}
Let $E=\Gamma(G,u)$ for some unital po-group with RDP$_1.$ If $I$ is an ideal of $E,$ then the set
$$\phi(I)=\{x \in G: \exists \ x_i,y_j \in I,\ x= x_1+\cdots +x_n-y_1-\cdots -y_m\}$$
is a convex subgroup of $G$ generated by $I.$ The set $\phi(I)$ is an o-ideal if and only if $I$ is a normal ideal, and $E/I =\Gamma(G/\phi(I),u/\phi(I)).$ The mapping $I\mapsto \phi(I)$ is a one-to-one correspondence  between normal ideals of $E$ and o-ideals of $G$ preserving set-theoretical inclusion. If $K$ is an o-ideal, then the restriction $K\cap[0,u]$ is a normal ideal of $E.$
\end{proposition}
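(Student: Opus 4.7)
The plan is to verify the four claims in turn, exploiting that $u$ is a strong unit (so $G$ is generated as a group by $E=[0,u]$) and that \RDP$_1$ transfers refinements between $E$ and $G^+$.

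First I show $\phi(I)$ is a convex subgroup coinciding with the subgroup generated by $I$. Containment of $I$ and closure under inversion (by reversing the sign pattern) are immediate, and $0\in\phi(I)$ since $0\in I$. For closure under addition it suffices to re-express $-y+x$ with $x,y\in I$ in the required form $\sum x_i-\sum y_j$: one solves $x+y'=y+x'$ in $G^+$ using a common upper bound for $x,y$ inside $E$, and invokes \RDP$_1$ to keep $x',y'$ in $[0,u]$, hence in $I$ by downward closure. For convexity, suppose $0\le v\le h$ with $h\in\phi(I)\cap G^+$; since the negative terms in any representation of $h$ are $\le 0$, we have $h\le x_1+\cdots+x_n$ with $x_i\in I$, and \RDP\ in $G^+$ produces $v=v_1+\cdots+v_n$ with $0\le v_i\le x_i$; downward closure of $I$ gives $v_i\in I$, and hence $v\in\phi(I)$. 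The general case $a\le v\le b$ with $a,b\in\phi(I)$ reduces to this by translation.

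Second, I establish the correspondence. Because $u$ is a strong unit, every $g\in G$ admits a representation $g=e_1+\cdots+e_k-f_1-\cdots-f_l$ with $e_i,f_j\in E$; thus invariance of $\phi(I)$ under conjugation by all of $G$ reduces to invariance under conjugation by elements of $E$, which unfolds to the partial-operation condition $a+I=I+a$ for every $a\in E$. Hence $\phi(I)$ is an o-ideal if and only if $I$ is normal. For the bijection, $\phi(I)\cap[0,u]=I$ by the convexity argument (any element of $\phi(I)\cap[0,u]$ is dominated by a sum of $I$-elements and thus lies in $I$), giving injectivity; conversely, if $K$ is an o-ideal of $G$ then $I:=K\cap[0,u]$ is a normal ideal of $E$ (downward closure and closure under $+$ from convexity of $K$, normality inherited from $K$), and any $k\in K\cap G^+$ satisfies $k\le nu$, so by \RDP\ it splits as $k=k_1+\cdots+k_n$ with $k_i\in K\cap[0,u]=I$; this yields $\phi(I)=K$. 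Set-theoretic monotonicity is clear.

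Third, for the quotient, $G/\phi(I)$ carries \RDP$_1$ by \cite[Prop 6.1]{174}, and the canonical projection $G\to G/\phi(I)$ restricts to a surjective pseudo effect algebra homomorphism $\pi\colon E\to\Gamma(G/\phi(I),u/\phi(I))$ whose kernel on $E$ equals $\phi(I)\cap[0,u]=I$; invoking the ideal--congruence correspondence for pseudo effect algebras with \RDP$_1$ (recalled before the proposition) yields the isomorphism $E/I\cong\Gamma(G/\phi(I),u/\phi(I))$. The step I expect to be the main obstacle is closure of $\phi(I)$ under addition in the non-abelian case: rearranging an alternating word in $\pm I$ into the prescribed form $\sum x_i-\sum y_j$ requires \RDP$_1$-based refinements that stay within $I$, and this is the point where the argument genuinely exploits the ideal structure of $I$ in $E$ rather than merely the subgroup property in $G$.
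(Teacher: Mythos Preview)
The paper does not prove this proposition; it is stated without proof, the surrounding text referring to \cite{185} for the underlying correspondence between ideals of $E$ and o-ideals of $G$. So there is no in-paper argument to compare against, and I assess your outline on its own.

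Most of it is sound. You correctly locate the crux in rewriting $-y+x$ as $x'-y'$ with $x',y'\in I$, and this does work via \RDP$_1$: decompose $u=x+x^\sim=y+y^\sim$ to obtain $x=c_{11}+c_{12}$, $y=c_{11}+c_{21}$ with $c_{12}\,\com\,c_{21}$ and $c_{11},c_{12},c_{21}\in I$ by downward closure; then $-y+x=-c_{21}+c_{12}=c_{12}-c_{21}$ by the commutation clause. The convexity argument, the bijection $\phi(I)\cap[0,u]=I$, $\phi(K\cap[0,u])=K$, and the quotient step via the ideal--congruence correspondence are all fine.

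The genuine gap is the claim that conjugation-invariance of $\phi(I)$ under each $a\in E$ ``unfolds to'' the partial-operation condition $a+I=I+a$. Pseudo-effect-algebra normality only yields $-a+x+a\in I$ \emph{when $x+a$ is defined in $E$}, i.e.\ when $x+a\le u$; for arbitrary $x\in I$ and $a\in E$ this need not hold, so the reduction as stated does not go through. A repair uses one more \RDP$_0$ step: since $x\le u=a+a^\sim$, write $x=x_1+x_2$ with $x_1\le a$ and $x_2\le a^\sim$ (hence $x_1,x_2\in I$). Then $x_2+a\le u$, so $I+a\subseteq a+I$ gives $-a+x_2+a\in I$. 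For $x_1$, set $d:=-x_1+a\in E$; one checks $-a+x_1+a=-d+x_1+d$ and $x_1+d=a\in E$, so $I+d\subseteq d+I$ yields $-d+x_1+d\in I$. Combining, $-a+x+a=(-a+x_1+a)+(-a+x_2+a)\in\phi(I)$. With this patch your outline becomes a correct proof.
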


In addition, due to \cite[Prop 3.1]{185}, if $a$ is an element of a pseudo effect algebra $E$ with RDP$_0,$ then the normal ideal of $E$ generated by $a$ is the set

\begin{align*}
N_0(a)=\{x\in E: x&=x_1\minusre (a_1+x_1)+\cdots +x_n \minusre (a_n+x_n)\\
&=(y_1+a'_1)\minusli y_1+\cdots +(y_m+a'_m)\minusli y_m,\
 x_i,y_j\in E, a_i,a'_j\le a\},
\end{align*}
assuming that the corresponding elements 
are defined in $E.$

\begin{lemma}\label{le:6.2}
Let $G=(G;+,-,0)$ be a po-group with \RDP$_1$ and $N$ be a normal ideal of the pseudo effect algebra $E=\Gamma(G,u).$ If $x \in N,$ then $x^{--},x^{\sim\sim}\in N.$ More general, if for any $y,z\in E$  such that $x+y$ and $z+x$ are defined in $E,$ the elements $y\minusre (x+y)= -y +x + y$ and $(z+x)\minusli z= z+x-z$ belong to $N,$ where the calculation is taken in the po-group $G.$
\end{lemma}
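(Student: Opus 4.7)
The plan is to derive both conclusions from one source, namely the defining property of normality $N+y = y+N$ for every $y \in E$, transferred into the enveloping po-group $G$ provided by Theorem \ref{th:2.2} (available here because \RDP$_1$ is assumed).

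First I would prove the general statement. Fix $x \in N$ and $y \in E$ with $x+y$ defined in $E$. Because $x \in N$, the element $x+y$ belongs to $N+y$, and by normality this set equals $y+N$. Hence $x+y = y+n$ for some $n \in N$ with $y+n$ defined in $E$; viewed in the ambient group $G$ this reads $n = -y+x+y$, so $y \minusre (x+y) = -y+x+y = n \in N$. The symmetric argument applied to $N+z = z+N$ for $z+x$ defined in $E$ yields an $m \in N$ with $m+z = z+x$, and hence $(z+x) \minusli z = z+x-z = m \in N$.

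From this general claim I would then deduce the first assertion by two judicious choices. Taking $y := x^\sim$, the identity $x+x^\sim = 1$ guarantees $x+y$ is defined in $E$, and an elementary computation in $G$ gives $-y+x+y = -(-x+u)+x+(-x+u) = -u+x+u$, which is exactly $x^{\sim\sim}$. Dually, taking $z := x^-$, we have $z+x = 1$ defined in $E$ and $z+x-z = (u-x)+x-(u-x) = u+x-u = x^{--}$. Consequently $x^{--}, x^{\sim\sim} \in N$.

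The only delicate point is the translation between the partial-operation formulation of normality in $E$ and ordinary group arithmetic in $G$; this passage is precisely what Theorem \ref{th:2.2} grants once \RDP$_1$ is in force, and after that the proof reduces to elementary manipulations with the two negations. I do not foresee any substantive obstacle beyond this bookkeeping.
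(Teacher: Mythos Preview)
Your argument is correct, and in fact more elementary than the paper's. The paper proves this lemma by invoking Proposition~\ref{pr:6.1}: it passes to the o-ideal $\phi(N)$ of $G$ generated by $N$, uses that $\phi(N)$ is a normal subgroup to conclude that the conjugates $u+x-u$, $-u+x+u$, $-y+x+y$, $z+x-z$ lie in $\phi(N)$, and then returns to $N$ via $\phi(N)\cap[0,u]=N$. This route genuinely consumes the \RDP$_1$ hypothesis, since Proposition~\ref{pr:6.1} is stated for unital po-groups with \RDP$_1$. You instead work directly from the defining equation $N+y=y+N$ of a normal ideal in $E$ and read off the required element in $N$ by cancellation in the ambient group; this avoids the ideal/o-ideal correspondence entirely.

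One small remark: your appeal to Theorem~\ref{th:2.2} is unnecessary. The hypothesis already hands you $E=\Gamma(G,u)$, so the ``translation between the partial-operation formulation \dots\ and ordinary group arithmetic'' is immediate from the definition of $\Gamma(G,u)$ and requires no representation theorem. In particular your proof goes through for an arbitrary po-group $G$ (no \RDP$_1$ needed), which is a mild strengthening of the lemma as stated.
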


\begin{proof}
Let $x \in N$ and suppose then $x^{--}=u+x-u$ and $x^{\sim\sim}=-u+x+u,$ where addition and subtraction are defined in the po-group $G.$ Let $\phi(N)$ be the o-ideal of $G$ generated by $N,$ by Proposition \ref{pr:6.1}, the elements $x^{--}$ and $x^{\sim\sim}$ belong to $\phi(N),$ therefore again by Proposition \ref{pr:6.1}, $x^{--},x^{\sim\sim} \in N.$

In the same way we prove that the elements $-y+x+y, z+x-z \in N.$
\end{proof}

We note that every proper ideal of $K^{\lambda,\rho}_I(G)_{ea}$ is a subset of $(G^+)^I.$

An element $\langle f_j\colon j \in I\rangle$ is said to be $\alpha$-{\it dimensional}, for some cardinal $\alpha,$ if $|\{j \in I: f_j \ne e\}|=\alpha.$ In the same  way we define an $\alpha$-dimensional element $\langle a_i^{-1}\colon i \in I\rangle.$ One-dimensional elements are particulary easy to work with. Moreover, every element $\langle f_j\colon j \in I\rangle$ is a join of one-dimensional elements, and any element $\langle a_i^{-1}\colon i \in I\rangle$ is a meet of one-dimensional ones.

\begin{proposition}\label{pr:6.3}
Let $I$ be a set and $\lambda,\rho:I \to I$ be bijections.
If $H$ is an o-ideal of a directed po-group $G$ and $N=H^+,$ then $N^I:=\{\langle f_j\colon j\in I\rangle: f_j \in H,\ j\in I\}$ is a normal ideal of the kite pseudo effect algebra
$K^{\lambda,\rho}_I(G)_{ea}.$ In addition, if $N_f^I$ denote the set of all finite dimensional elements from $N^I,$ then $N_f^I$ is a non-trivial normal ideal of the kite pseudo effect algebra.

Conversely, if $J$ is a normal ideal of $K^{\lambda,\rho}_I(G)_{ea},$ then $\pi_j(J)$ is the positive cone of an o-ideal of $H,$ where
$\pi_j$ is the $j$-th projection of $\langle f_j \colon j \in I\rangle \mapsto f_j.$
\end{proposition}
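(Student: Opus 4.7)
My plan has three stages: verify that $N^I$ is a normal ideal; extract the non-triviality statement for $N_f^I$; and for the converse analyze the coordinate projections $\pi_j(J)$.

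For $N^I$, closure under $+$ is immediate from case~$(IV)$ of Theorem~\ref{th:3.4}, since $\langle f_j\rangle + \langle g_j\rangle = \langle f_j g_j\rangle$ and $H^+$ is a sub-semigroup. Downward closure uses convexity of $H$: if $y \le x \in N^I$ then $y \in (G^+)^I$ (as $(G^-)^I$ sits above $(G^+)^I$), and coordinatewise $e \le y_j \le x_j \in H$ gives $y_j \in H^+$. For normality, the case $x \in (G^+)^I$ is coordinatewise conjugation in $H$ (using normality of $H$); the substantive case is $x = \langle a_i^{-1}\rangle \in (G^-)^I$ and $y = \langle f_j\rangle \in N^I$, where equating $z + x = x + y$ via $(II)$ and $(III)$ forces $z_j = a_{\lambda(j)}^{-1} f_{(\rho^{-1}\lambda)(j)} a_{\lambda(j)}$. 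This lies in $H^+$ precisely because $H$ is normal in $G$, and the existence condition $z_{\lambda^{-1}(i)} \le a_i$ reduces via $a_i \ge e$ to the hypothesis $f_{\rho^{-1}(i)} \le a_i$. The dual identity $x + z' = y + x$ is symmetric.

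For $N_f^I$: the support of the $z$ above is the image of $\mathrm{supp}(y)$ under the bijection $\sigma := \lambda^{-1}\rho$, so finiteness is preserved. Non-triviality (when $H \ne \{e\}$) is witnessed by a one-dimensional element supported at some $j_0$ with a nontrivial value from $H^+$, which exists by directedness of $H$ inherited from $G$.

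For the converse, fix $j$ and set $P_j := \pi_j(J)$. Closure of $P_j$ under multiplication and convexity in $G^+$ follow by downward-reducing any witness in $J$ to single-coordinate support at $j$ and applying case~$(IV)$ together with the downward closure of $J$. The main obstacle is showing $P_j$ is closed under $G$-conjugation. My plan is to compose the two normality directions: the first takes a single-support witness at $j$ to one at $\sigma(j)$, conjugating the value by an arbitrary element of $G^+$; the second shifts back to $j$ with a further $G^+$-conjugation. Iterating with careful positivity choices, and then using directedness of $G$ to write any $g \in G$ as $p q^{-1}$ with $p,q \in G^+$, should yield $g^{-1} P_j g \subseteq P_j$. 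Then $H_j := P_j P_j^{-1}$ is a normal, convex, directed subgroup of $G$, i.e., an o-ideal, with $H_j^+ = P_j$. The delicate point is tracking positivity constraints at each iteration (particularly for infinite $\sigma$-orbits) and exploiting convexity of $P_j$ together with directedness of $G$ to extract conjugates by arbitrary elements from those by sufficiently large ones.
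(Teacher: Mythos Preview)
Your treatment of $N^I$ and $N_f^I$ is essentially the paper's proof: closure and downward closure via case~(IV) and convexity of $H$, normality via the explicit conjugate $h_j=a_{\rho(j)}f_{\lambda^{-1}\rho(j)}a_{\rho(j)}^{-1}$ (you wrote the mirror identity $z_j=a_{\lambda(j)}^{-1}f_{\rho^{-1}\lambda(j)}a_{\lambda(j)}$, which is the same computation from the other side), and the observation that the support shift is by the bijection $\lambda^{-1}\rho$, so finiteness is preserved.

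For the converse the paper offers only a one-line assertion, so your plan is genuinely more detailed. It is correct, but your route to $G$-conjugation closure of $P_j=\pi_j(J)$ is more elaborate than needed. You propose using normality of $J$ against elements of $(G^-)^I$, which shifts the support to $\sigma(j)$ and back while conjugating, and you rightly flag the positivity bookkeeping as delicate. A simpler route avoids the shift entirely: use normality of $J$ against elements $x=\langle c_k\rangle\in(G^+)^I$. By case~(IV), $x+\langle g_k\rangle=\langle c_kg_k\rangle$ and $\langle h_k\rangle+x=\langle h_kc_k\rangle$, so $x+J=J+x$ forces $\langle c_kg_kc_k^{-1}\rangle\in J$ and dually $\langle c_k^{-1}g_kc_k\rangle\in J$; no support shift, no existence constraint. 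Hence $cfc^{-1},\,c^{-1}fc\in P_j$ for every $c\in G^+$. Then directedness of $G$ (write $g=dq^{-1}$ with $d,q\in G^+$) gives $g^{-1}fg=q(d^{-1}fd)q^{-1}\in P_j$ for arbitrary $g\in G$, and $H_j:=P_jP_j^{-1}$ is the desired o-ideal. This sidesteps the iteration and the ``sufficiently large'' extraction you anticipated.
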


\begin{proof}
From $(IV)$ of Theorem \ref{th:3.4} we conclude that $N^I$ is an ideal of the kite. To show the normality $x + N^I = N^I+x$ is necessary to assume that $x =\langle a_i^{-1} \colon i \in I\rangle.$ By $(II)-(III)$ of Theorem \ref{th:3.4}, we have $\langle f_j \colon j \in I\rangle + \langle a_i^{-1} \colon i \in I\rangle=
\langle f_{\lambda^{-1}(i)}a_i^{-1} \colon i \in I\rangle.$ If we set $h_j =a_{\rho(j)}f_{\lambda^{-1}(\rho(j))}a_{\rho(j)}^{-1},$
then $h_j \in N^+$ and $\langle a_i^{-1} \colon i \in I\rangle + \langle h_j \colon j \in I\rangle       = \langle f_j \colon j \in I\rangle + \langle a_i^{-1} \colon i \in I\rangle. $ In a similar way we deal with the second case.

Using the same methods as for $N^I,$ we see that $N_f^I$ is a non-trivial normal ideal of the kite pseudo effect algebra.

Now let $J$ be a normal ideal of the kite pseudo effect algebra of $G.$ Then $J\subseteq (G^+)^I$ so that the projection is the positive cone of an o-ideal $H$ of $G.$
\end{proof}

\begin{lemma}\label{le:6.4}
Let $K^{\lambda,\rho}_I(G)_{ea}$ be a kite pseudo effect algebra and $\langle f_j\colon j \in I\rangle$ be one-dimensional such that $f_{j_0} \ne e$ for some $j_0 \in I.$ Then the elements
$\langle f_j\colon j \in I\rangle^{--},$ $\langle f_j\colon j \in I\rangle^{-\sim}=\langle f_j\colon j \in I\rangle,$ $\langle f_j\colon j \in I\rangle^{\sim-}=\langle f_j\colon j \in I\rangle,$ and $\langle f_j\colon j \in I\rangle^{\sim\sim}$ are also one-dimensional. In addition, for $x,y \in K^{\lambda,\rho}_I(G)_{ea},$ $ x \minusre (\langle f_j\colon j \in I\rangle + x)$ and $y \minusli (y + \langle f_j\colon j \in I\rangle)$ are one-dimensional assuming that the corresponding elements exist.

\begin{enumerate}
\item[{\rm (1)}] $\langle g_j\colon j \in I\rangle \minusre (\langle f_j\colon j \in I\rangle + \langle g_j\colon j \in I\rangle)= \langle g_j^{-1}f_jg_j\colon j \in I\rangle,$

$(\langle g_j\colon j \in I\rangle + \langle f_j\colon j \in I\rangle)\minusli \langle g_j\colon j \in I\rangle= \langle g_jf_jg_j^{-1}\colon j \in I\rangle.$

\item[{\rm (2)}] $ (\langle f_j\colon j \in I\rangle + \langle a_i^{-1}\colon i \in I\rangle)\minusli \langle f_j\colon j \in I\rangle =\langle f_{\lambda^{-1}(i)} a_i^{-1} f^{-1}_{\rho^{-1}(i)}\colon i \in I\rangle,$ if $f_{\lambda^{-1}(i)} \le a_i,$

$\langle f_j \colon j \in I\rangle \minusre (\langle a_i^{-1}\colon i \in I\rangle + \langle f_j\colon j \in I\rangle)= \langle f_{\lambda^{-1}(i)}^{-1} a_i^{-1} f_{\rho^{-1}(i)}\colon i \in I\rangle,$ if $f_{\rho^{-1}(i)} \le a_i,$

\item[{\rm (3)}] $\langle f_j\colon j \in I\rangle^{--}\wedge \langle f_j\colon j \in I\rangle=0$ iff $\rho(\lambda^{-1}(j))\ne j.$

\item[{\rm (4)}] $\langle f_j\colon j \in I\rangle^{\sim\sim}\wedge \langle f_j\colon j \in I\rangle=0$ iff $\lambda(\rho^{-1}(j))\ne j.$
\end{enumerate}
\end{lemma}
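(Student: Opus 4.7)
My plan is a direct coordinate-wise computation built from the negation formulas in Proposition~\ref{pr:3.2} and the three addition rules (II)--(IV) of Theorem~\ref{th:3.4}. Every partial sum and every left/right complement in the kite is specified coordinate by coordinate, so each claim of the lemma reduces to a calculation in $G$.

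For the four double negations I would iterate Proposition~\ref{pr:3.2}. Since $\lambda$ and $\rho$ are bijections, a single $^-$ moves the unique non-trivial entry $f_{j_0}$ to position $\lambda(j_0)$ and inverts it; a subsequent $^\sim$ undoes both the re-indexing and the inversion, so $\langle f_j\rangle^{-\sim}=\langle f_j\rangle^{\sim-}=\langle f_j\rangle$, while $\langle f_j\rangle^{--}$ and $\langle f_j\rangle^{\sim\sim}$ are one-dimensional with the unique non-trivial entry at $\rho^{-1}\lambda(j_0)$ and $\lambda^{-1}\rho(j_0)$ respectively. This prepares parts (3) and (4) at once: in $(G^+)^I$ the meet is coordinatewise, hence vanishes exactly when the two supports differ, which gives the stated criterion (read at the distinguished index $j_0$); the $\sim\sim$ case is symmetric under $\lambda\leftrightarrow\rho$.

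For the conjugacy claims in the preamble and identities (1), (2), I would split on whether the side element lies in $(G^+)^I$ or in $(G^-)^I$. When $x=\langle g_j\rangle$ is positive, rule (IV) gives $\langle f_j\rangle+\langle g_j\rangle=\langle f_j g_j\rangle$; solving $\langle g_j\rangle+\langle h_j\rangle=\langle f_j g_j\rangle$ coordinatewise forces $h_j=g_j^{-1}f_j g_j$, which yields the first formula of (1), and the $\minusli$ formula is symmetric. One-dimensionality is automatic since conjugation in $G$ fixes identity entries. When $x=\langle a_i^{-1}\rangle$ is negative, the mixed sum is computed via (II) or (III), and the corresponding complement is obtained by inverting the other mixed rule and solving entry by entry, producing exactly the expressions displayed in (2). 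For the preamble's conjugates of $\langle f_j\rangle$ in this negative case, the resulting sequence carries a single non-trivial entry whose position is a permutation of $j_0$ dictated by $\lambda,\rho$, confirming one-dimensionality.

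The one genuinely technical point is in the mixed case: after writing down a candidate complement one must verify that the defining inequality of rule (II) (namely $f_{\rho^{-1}(i)}\le a_i$) or of (III) (namely $f_{\lambda^{-1}(i)}\le a_i$) is automatically forced by the existence of the original sum, so that the candidate lies in $(G^-)^I$ and the kite addition is in fact defined. This is where the bookkeeping of $\lambda$ versus $\rho$, and of which negation to apply, must be done carefully; everything else is a mechanical translation of the defining rules of the kite.
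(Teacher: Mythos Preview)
Your proposal is correct and follows the same route as the paper: the paper's proof is a single sentence saying that everything ``follows easily from the definition of the right and left negations presented in Proposition~\ref{pr:3.2} and from the fact that both $\rho\circ\lambda^{-1}$ and $\lambda\circ \rho^{-1}$ are bijections,'' and what you have written is precisely the expanded coordinatewise verification of this. Your care about the side conditions in the mixed case (checking that the candidate complement satisfies the inequality required for rule (II) or (III)) goes beyond what the paper spells out but is exactly the bookkeeping implicit in the author's one-line proof.
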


\begin{proof}
The proofs follow easily from the definition of the right and left negations presented in Proposition \ref{pr:3.2} and from the fact that both $\rho\circ\lambda^{-1}$ and $\lambda\circ \rho^{-1}$  are bijections.
\end{proof}

\begin{proposition}\label{pr:6.5}
Let a kite pseudo effect algebra of $G,$ $K^{\lambda,\rho}_I(G)_{ea},$ have the least non-trivial normal ideal. Then $G$ has the least non-trivial o-ideal.
\end{proposition}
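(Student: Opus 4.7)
The plan is to construct the least non-trivial o-ideal of $G$ directly from the assumed least non-trivial normal ideal $J_0$ of $K^{\lambda,\rho}_I(G)_{ea}$, using Proposition \ref{pr:6.3} as the dictionary between normal ideals of the kite and o-ideals of $G$. Since every proper ideal of the kite is contained in $(G^+)^I$, we have $J_0\subseteq (G^+)^I$, so each coordinate projection $\pi_j(J_0)$ sits inside $G^+$. By the converse half of Proposition \ref{pr:6.3}, for every $j\in I$ there is an o-ideal $H_j$ of $G$ with $H_j^+=\pi_j(J_0)$.

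I would then let $H_0$ be the o-ideal of $G$ generated by $\bigcup_{j\in I}H_j$ and claim that $H_0$ is the least non-trivial o-ideal. First, $H_0$ is non-trivial: pick any $x=\langle f_j\colon j\in I\rangle\in J_0\setminus\{0\}$; since $x\ne 0$, some coordinate $f_{j_0}$ satisfies $f_{j_0}>e$, and then $f_{j_0}\in\pi_{j_0}(J_0)=H_{j_0}^+$ forces $H_{j_0}\ne\{e\}$, hence $H_0\ne\{e\}$.

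Second, $H_0$ is contained in every non-trivial o-ideal of $G$. Let $H$ be such an o-ideal and put $N=H^+$. By the direct half of Proposition \ref{pr:6.3}, the set $N_f^I$ of finite-dimensional elements of $N^I$ is a non-trivial normal ideal of the kite. Minimality of $J_0$ gives $J_0\subseteq N_f^I$, so for every $\langle f_j\colon j\in I\rangle\in J_0$ each coordinate $f_j$ lies in $H^+$. Hence $\pi_j(J_0)\subseteq H^+$ for every $j$, which yields $H_j\subseteq H$, and passing to the generated o-ideal we conclude $H_0\subseteq H$.

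The only place where real care is needed is the converse direction of Proposition \ref{pr:6.3}: one must know that the coordinatewise projection of a normal ideal of the kite is the positive cone of a genuinely convex, directed, normal subgroup of $G$, and Lemma \ref{le:6.2} (passage to conjugates) stands in the background to guarantee this. Once Proposition \ref{pr:6.3} is applied in both directions, the rest is a routine matching along the induced correspondence, and $H_0$ is produced as the least non-trivial o-ideal of $G$.
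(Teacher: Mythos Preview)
Your argument is correct and takes a genuinely different route from the paper. The paper argues by contraposition: assuming $G$ has no least non-trivial o-ideal, one chooses a family $\{H_t\}_{t\in T}$ of non-trivial o-ideals with $\bigcap_t H_t=\{e\}$, lifts each to the normal ideal $(H_t^+)^I$ of the kite via the forward half of Proposition~\ref{pr:6.3}, and observes that the intersection of these normal ideals is $\{0\}$, contradicting the existence of a least non-trivial normal ideal in the kite. That argument uses only the direction ``o-ideal $\Rightarrow$ normal ideal'' of Proposition~\ref{pr:6.3} and is very short. Your approach is constructive: you project the least normal ideal $J_0$ down to $G$ via the converse half of Proposition~\ref{pr:6.3} and exhibit the minimal o-ideal explicitly, which yields a bit more information (one sees the minimal o-ideal as a coordinate projection of $J_0$) at the cost of invoking both halves of Proposition~\ref{pr:6.3}. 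One small simplification of your write-up: you do not really need the o-ideal \emph{generated} by $\bigcup_j H_j$, whose existence in an arbitrary po-group would itself require justification (an intersection of directed convex subgroups need not be directed). Your own second step, applied with $H=H_{j_0}$, already gives $H_j\subseteq H_{j_0}$ for every $j$, so the single non-trivial projection $H_{j_0}$ is itself the least non-trivial o-ideal and the passage to a generated object can be dropped.
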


\begin{proof}
Suppose the converse, i.e. $G$ has no least non-trivial o-ideal. Then there exists a set $\{H_t: t \in T\}$ of non-trivial o-ideals of $G$ such that $\bigcap_{t \in T} H_t =\{e\}.$ Set $N_t=H_t^+,$ $t \in T.$ By Proposition \ref{pr:6.3}, every $N_t^I$ is a normal ideal of the kite. Hence, $\bigcap_{t \in T} H_t \ne \{0\}$ and there is a non-zero element $f=\langle f_j\colon j \in I\rangle\in \bigcap_{t\in T} N_t^I.$ Then, for every index $j \in I,$ $f_j \in H_t$ for every $t \in T$ which yields $f_j=e$ for each $j \in I$ and $f=\langle f_j\colon j \in I\rangle = 0,$ which is a contradiction. Therefore, $G$ has the least non-trivial o-ideal.
\end{proof}

\begin{theorem}\label{th:6.6}
Let $I$ be a set and $\lambda,\rho:I \to I$ be bijections and let $G$ be a directed po-group with \RDP$_1.$ Let $K^{\lambda,\rho}_I(G)_{ea}$ be a kite pseudo effect algebra of a po-group $G.$ The following are equivalent:

\begin{enumerate}
\item[{\rm (1)}] $G$ has the least non-trivial o-ideal and for all $i,j\in I$ there exists an integer $m\ge 0$ such that $(\rho\circ\lambda^{-1})^m(i)=j$ or $(\lambda\circ \rho^{-1})^m(i)=j.$
\item[{\rm (2)}] $K^{\lambda,\rho}_I(G)_{ea}$ has the least non-trivial normal ideal.
\end{enumerate}
\end{theorem}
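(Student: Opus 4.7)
The plan is to exploit both the RDP$_1$-correspondence between normal ideals of the kite and $o$-ideals of its enveloping unital po-group (Proposition \ref{pr:6.1}), valid here because $G$ is directed with \RDP$_1$ and so the kite has \RDP$_1$ by Theorem \ref{th:4.1}, and the explicit support/value formulas for double negations and for $y\minusre(x+y)$-type expressions (Proposition \ref{pr:3.2} and Lemma \ref{le:6.4}). Direction (2) $\Rightarrow$ (1) splits naturally: the least $o$-ideal clause is exactly Proposition \ref{pr:6.5}, so only the transitivity condition on $\lambda,\rho$ needs a fresh argument.

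For (1) $\Rightarrow$ (2), let $H$ be the least non-trivial $o$-ideal of $G$ and let $N_f^I$ be the normal ideal of Proposition \ref{pr:6.3} of finitely supported sequences with coordinates in $H^+$. I would show $N_f^I$ is the least non-trivial normal ideal by taking an arbitrary non-trivial normal ideal $J$ and producing a general element of $N_f^I$ inside it. Since proper ideals lie in $(G^+)^I$, ideal-convexity lets me truncate a non-zero element of $J$ to a one-dimensional $g\in J$ whose only non-identity coordinate is $g_{j_0}\neq e$. Two closure steps then suffice. First, relocation of support: Lemma \ref{le:6.2} gives $g^{--},g^{\sim\sim}\in J$, and unwinding Proposition \ref{pr:3.2} shows that these are one-dimensional with the same value $g_{j_0}$ but supports shifted by $\rho^{-1}\lambda$ and its inverse; these permutations are conjugate in $S_I$ to $\rho\lambda^{-1}$ and $\lambda\rho^{-1}$, so their cyclic groups have the same orbit structure, and the assumed transitivity of $\langle \rho\lambda^{-1}\rangle$ gives, for each $j\in I$, a finite alternating sequence of these operations taking $j_0$ to $j$. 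Second, enrichment of value: Lemma \ref{le:6.4}(1) (combined with Lemma \ref{le:6.2}) delivers $\langle y_j^{-1}f_jy_j\rangle\in J$ for any $y\in (G^+)^I$, while Lemma \ref{le:6.4}(2) applied with an upper element $x\in (G^-)^I$ produces $a\,g_{j_0}\,a^{-1}$-type conjugates at the shifted support; directedness of $G$ (every element is a product of a positive one with the inverse of a positive one) then lets us realise all $G$-conjugates of $g_{j_0}$ inside $J$. Together with finite sums and ideal-convexity, this shows that the values reachable at any fixed coordinate fill out the positive cone of the $o$-ideal of $G$ generated by $g_{j_0}$, which contains $H^+$ by minimality of $H$; finite sums of one-dimensional elements then cover $N_f^I$.

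For the transitivity half of (2) $\Rightarrow$ (1), argue contrapositively. If $\langle \rho\lambda^{-1}\rangle$ acts on $I$ with at least two orbits, write $I=\bigsqcup_{t\in T}O_t$ and, for each orbit, set
\[
N_{O_t}:=\{\langle f_j\rangle\in (G^+)^I : f_j\in H \text{ for } j\in O_t,\ f_j=e \text{ for } j\notin O_t\}.
\]
Each $N_{O_t}$ is a non-trivial proper ideal of the kite; the key verification is normality $x+N_{O_t}=N_{O_t}+x$, which is automatic for $x\in (G^+)^I$ and, for $x=\langle a_i^{-1}\rangle$, is witnessed by the explicit translate $h_j=a_{\rho(j)}f_{\lambda^{-1}\rho(j)}a_{\rho(j)}^{-1}$, whose support stays in $O_t$ because each orbit is invariant under $\lambda^{-1}\rho$, and whose coordinates stay in $H$ because $H$ is a normal subgroup of $G$ and the positive cone of a po-group is stable under conjugation. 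Since the orbits are pairwise disjoint, $\bigcap_{t\in T}N_{O_t}=\{0\}$, so any candidate least non-trivial normal ideal would be forced to sit in every $N_{O_t}$, hence be trivial --- a contradiction.

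The main obstacle I expect is the value-enrichment step of (1) $\Rightarrow$ (2): justifying that the normal-ideal closure of a single value $g_{j_0}$ inside the kite recovers the entire positive cone of the $o$-ideal of $G$ generated by $g_{j_0}$, including conjugates by \emph{all} group elements and not merely by positive ones. Handling this cleanly will require either combining the two flavours of conjugation in Lemma \ref{le:6.4}(1)--(2) with the directedness of $G$, or else routing through the enveloping po-group and invoking Proposition \ref{pr:6.1} to deduce the value closure from normality at the po-group level. The support-shifting and the orbit-based contradiction are by comparison mechanical once the formulas are at hand.
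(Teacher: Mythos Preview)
Your plan coincides with the paper's in both directions: for $(1)\Rightarrow(2)$ you take $N_f^I$ as the candidate least ideal and, starting from a one-dimensional element of an arbitrary non-trivial normal ideal $J$, use iterated double negations for support relocation together with Lemma~\ref{le:6.4}(1) (and directedness of $G$) for value enrichment to recover all of $N_f^I$ inside $J$; for $(2)\Rightarrow(1)$ you invoke Proposition~\ref{pr:6.5} for the $o$-ideal clause and argue the transitivity clause contrapositively via orbit-supported normal ideals with trivial intersection. This is exactly the paper's strategy, and your framing of $(1)\Rightarrow(2)$ as ``$N_f^I\subseteq J$ for every non-trivial $J$'' is if anything slightly more direct than the paper's ``$N_0(g)=N_f^I$ for every one-dimensional $g\in N_f^I$''.

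There is, however, a genuine slip in your $(2)\Rightarrow(1)$ argument --- and the paper's own proof contains the same oversight. You take the $O_t$ to be orbits of $\rho\lambda^{-1}$ and claim $N_{O_t}$ is normal because ``each orbit is invariant under $\lambda^{-1}\rho$''. But conjugate permutations share cycle \emph{type}, not orbits. For $I=\{0,\dots,5\}$, $\lambda=(012)(345)$, $\rho=(024)(135)$, the $\rho\lambda^{-1}$-orbits are $\{0,4,5\}$ and $\{1,2,3\}$, whereas the $\lambda^{-1}\rho$-orbits are $\{0,1,5\}$ and $\{2,3,4\}$; your own witness $h_j=a_{\rho(j)}f_{\lambda^{-1}\rho(j)}a_{\rho(j)}^{-1}$ then sends an $f$ supported at $4\in\{0,4,5\}$ to an $h$ supported at $3\notin\{0,4,5\}$, so $N_{\{0,4,5\}}$ is not normal. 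The repair is immediate: define the $O_t$ as orbits of $\lambda^{-1}\rho$ instead, since that is precisely the permutation appearing in the normality witness. Because $\lambda^{-1}\rho=\lambda^{-1}(\rho\lambda^{-1})\lambda$ is conjugate to $\rho\lambda^{-1}$, the failure of the transitivity condition in (1) still yields at least two $\lambda^{-1}\rho$-orbits, and the rest of your argument goes through unchanged. Your $(1)\Rightarrow(2)$ direction already handles this point correctly, since there you only need the support-shifting permutation $\rho^{-1}\lambda$ to be transitive, and that does follow from the conjugacy you observed.
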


\begin{proof}
(1) $\Rightarrow$ (2). By Theorem \ref{th:4.1}, the pseudo kite effect algebra $K^{\lambda,\rho}_I(G)_{ea}$ satisfies RDP$_1.$

Let $H$ be the least non-trivial o-ideal of $G$ and let $N=H^+.$
By Proposition \ref{pr:6.3}, $N^I$ and $N_f^I$ are normal ideals of the kite $K^{\lambda,\rho}_I(G)_{ea}.$ Therefore, it is necessary to show that $N^I_f$ is the least non-trivial normal ideal of the kite. In other words, we have to prove that the normal ideal of the kite generated by any nonzero element $\langle f_j\colon j \in I\rangle\in N^I_f$ equals to $N^I_f.$ It is equivalent to show the same for any one-dimensional element from $N^I_f.$ Indeed,
let $f=\langle f_j\colon j \in I\rangle$ be any element from $N^I\setminus \{0\}.$ There is a one-dimensional element $g=\langle g_j\colon j \in I\rangle\in N^I$ such that $0<\langle g_j\colon j \in I\rangle \le \langle f_j\colon j \in I\rangle.$ Hence, $N^I_f=N_0(g)\subseteq N_0(f)\subseteq N^I_f.$

Thus let $g$ be any one-dimensional element from $N^I_f$ and
let $N_0(g)$ be the normal ideal of the kite generated by the element $g.$
Without loss of generality assume $g=\langle g_0,e,\ldots\rangle,$ where $g_0>e,$ $g_0 \in G$; this is always possible by a suitable reordering of $I,$ regardless of its generality. Then $g_0$ generates $N$, that is $N=\{x\in G^+: x= x_1^{-1}g_1x_1\cdots x_n^{-1}g_nx_n=y_1g'_1y_1^{-1}\cdots y_mg'_m y_m^{-1},\ e\le g_i,g'_j\le g_0,\  x_i,y_j\in G^+,\ i=1,\ldots,n,\ j=1,\ldots,m,\  n,m\ge 1\},$ since $H$ is the least non-trivial o-ideal of $G.$ Doing double negations $m$ times of $\langle f_j\colon j \in I\rangle$, we obtain that  either $ \langle f_j\colon j \in I\rangle^{--m}=\langle f_{(\rho\circ \lambda^{-1})^m(j)}\colon j \in I\rangle$ and it belongs to $N_0(g)$ or $\langle f_j\colon j \in I\rangle^{\sim\sim m}=\langle f_{(\lambda\circ\rho^{-1})^m(j)}\colon j \in I\rangle$ which also belongs to $N_0(g).$  Consequently, for any $j \in I,$ there is an integer $m$ such that $(\rho\circ\lambda^{-1})^m(j)=0$ or $(\lambda\circ \rho^{-1})^m(j)=0,$ so that the one-dimensional element whose $j$-th coordinate is $g_0$ is defined in $N_0(g)$ for any $j \in J$; it is either $g^{--m}$ or $g^{\sim\sim m}.$ From Lemma \ref{le:6.4}(i), we see that $f^{-1}g_0f$ and $kg_0k^{-1}$ belong to $N_0(g_0)$ for all $g,k\in G^+,$ which yields, for every $g \in N,$ the one-dimensional element $\langle g,e,\ldots \rangle$ belongs to $N_0(g_0),$ and finally, every one-dimensional element $\langle \ldots, g,\ldots \rangle$  from $N^I_f$ belongs also to $N_0(g_0).$

Now let $J_0=\{j_1,\ldots,j_n\}$ be an arbitrary finite subset of $J,$ $|J_0|=m\ge 1,$ and choose arbitrary $m$ elements $h_{j_1},\ldots,h_{j_m} \in N.$ Define an $m$-dimensional element $g_{J_0}=\langle g_j\colon j \in I\rangle,$ where $g_j=h_{j_k}$ if $j=j_k$ for some $k=1,\ldots,m,$ and $g_j=e$ otherwise. In addition, for $k=1,\ldots,m,$ let $g_k=\langle f_j\colon j \in I\rangle,$ where $f_j=h_{j_k}$ if $j=j_k$ and $f_j=e$ if $j\ne j_k.$ Then $g_{J_0}=g_1+\cdots+g_k\in N_0(g_0).$

Consequently, $N_0(g_0)=N_f^I.$

(2) $\Rightarrow$ (1). By Proposition \ref{pr:6.5}, $G$ has the least non-trivial o-ideal, say $H_0$ and let $N_0=H_0^+.$ Suppose that (1) does not hold. Then for all different $i,j\in I$ and every integer $m\ge 0,$ we have $(\rho\circ\lambda^{-1})^m(i)\ne j$ and $(\lambda\circ \rho^{-1})^m(i)\ne j.$ According to \cite[Thm 5.5]{DvKo}, such indices $i$ and $j$ are said to be disconnected; otherwise they are called connected. If all distinct elements of a subset $K$ of $I$ are connected, $K$ is said to be a connected component of $I.$ By the assumption, there are two elements $i_0,j_0\in I$ which are disconnected. Then for example, the elements $\rho\circ\lambda^{-1}(i_0)$ and $i_0$ are connected.  Let $I_0$ and $I_1$ be a maximal set of mutually connected elements containing $i_0$ and $j_0,$ respectively. Then no element of $I_0$ is connected to any element of $I_1.$

We define $N_0^{I_0}$ as the set of all elements $\langle f_j\colon j \in I\rangle$ such that $j\notin I_0$ implies $f_j=e.$ In a similar way we define $N^{I_1}.$ Then both sets are non-trivial normal ideals of the kite. For example,  as in the proof of Proposition \ref{pr:6.3}, let $\langle f_j \colon j \in I\rangle\in N_0^{I_0}.$ Then $\langle f_j \colon j \in I\rangle + \langle a_i^{-1} \colon i \in I\rangle=
\langle f_{\lambda^{-1}(i)}a_i^{-1} \colon i \in I\rangle.$ If we set $h_j =a_{\rho(j)}f_{\lambda^{-1}(\rho(j))}a_{\rho(j)}^{-1},$
then $\langle h_j \colon j \in I\rangle \in N_0^{I_0}$ and $\langle a_i^{-1} \colon i \in I\rangle + \langle h_j \colon j \in I\rangle = \langle f_j \colon j \in I\rangle + \langle a_i^{-1} \colon i \in I\rangle.$ Similarly for the other possibilities. In the same way we can show that $N_0^{I_1}$ is a non-trivial normal ideal of the kite.

On the other hand, we have $N_0^{I_0} \cap N_0^{I_1} = \{e\}$ which contradicts that the kite has the least non-trivial normal ideal.
\end{proof}

Theorem \ref{th:6.6} has important consequences.

\begin{theorem}\label{th:6.7}
Let $|I|=n$ for some $n\ge 0,$ $\lambda,\rho: I \to I$ be bijections and $G$ be a directed po-group with \RDP$_1.$ If the kite pseudo effect algebra $K^{\lambda,\rho}_I(G)_{ea}$ has the least non-trivial normal ideal, then $G$ has the least non-trivial o-ideal and $K^{\lambda,\rho}_I(G)_{ea}$ is isomorphic to one of:
\begin{enumerate}
\item[{\rm (1)}] $K^{\emptyset,\emptyset}_0(G),$ if $n=0,$ $K^{id,id}_{1}(G)$ if $n=1.$

\item[{\rm (2)}] $K^{\lambda,\rho}_n(G)_{ea}$ for $n\ge 1$ and $\lambda (i)=i$ and $\rho(i) = i-1\ (\mathrm{mod}\, n).$

\end{enumerate}
\end{theorem}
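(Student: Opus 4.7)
The plan is to invoke Theorem \ref{th:6.6} and then normalize the pair $(\lambda,\rho)$ via a change of labels on $I$. By Theorem \ref{th:6.6}, the existence of a least non-trivial normal ideal in $K^{\lambda,\rho}_I(G)_{ea}$ already forces $G$ to have a least non-trivial o-ideal and also forces that, for every $i,j\in I$, some non-negative power of $\rho\circ\lambda^{-1}$ or of its inverse $\lambda\circ\rho^{-1}$ maps $i$ to $j$. Hence the cyclic subgroup of $S_I$ generated by $\rho\circ\lambda^{-1}$ acts transitively on the $n$-element set $I$, so $\rho\circ\lambda^{-1}$ must be a single $n$-cycle. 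The degenerate cases are immediate: for $n=0$ the kite consists only of $0$ and $1$, while for $n=1$ the only available bijection gives $\lambda=\rho=\mathrm{id}$, which already matches form (1).

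For $n\ge 2$, the next step is a \emph{relabeling lemma}: given bijections $\sigma,\tau:I\to I$, the map that sends $\langle a_i^{-1}\colon i\in I\rangle \mapsto \langle a_{\sigma^{-1}(i)}^{-1}\colon i\in I\rangle$ on the top part and $\langle f_j\colon j\in I\rangle \mapsto \langle f_{\tau^{-1}(j)}\colon j\in I\rangle$ on the bottom part is an isomorphism of pseudo effect algebras
\[
K^{\lambda,\rho}_I(G)_{ea} \;\longrightarrow\; K^{\sigma\lambda\tau^{-1},\,\sigma\rho\tau^{-1}}_I(G)_{ea}.
\]
This is verified by a direct check against the four defining cases (I)--(IV) of Theorem \ref{th:3.4} together with Proposition \ref{pr:3.2}; the reason it works is that $\lambda$ and $\rho$ enter those formulas only via indexing of coordinates, so conjugation by $\sigma$ and $\tau$ corresponds exactly to the relabeling of coordinates.

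To reach the claimed normal form, I would take $\sigma=\tau\lambda^{-1}$, which makes $\lambda':=\sigma\lambda\tau^{-1}=\mathrm{id}$ and $\rho':=\tau(\lambda^{-1}\rho)\tau^{-1}$. Since $\lambda^{-1}\rho = \lambda^{-1}(\rho\lambda^{-1})\lambda$ is conjugate in $S_n$ to the $n$-cycle $\rho\circ\lambda^{-1}$, it is itself an $n$-cycle. Because all $n$-cycles on $\{0,1,\ldots,n-1\}$ lie in a single conjugacy class of $S_n$, I can choose $\tau$ so that $\tau(\lambda^{-1}\rho)\tau^{-1}$ is precisely the cycle $i\mapsto i-1\pmod{n}$. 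Composed with the relabeling lemma, this identifies $K^{\lambda,\rho}_I(G)_{ea}$ with a kite of form (2).

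The step I expect to be the most delicate is the relabeling lemma: the partial operation $+$ is defined by four distinct formulas (I)--(IV), each involving $\lambda^{-1}$, $\rho^{-1}$, or neither, so every case must be checked individually and shown to intertwine with the coordinate permutations induced by $\sigma$ and $\tau$. The remaining ingredients --- transitivity forcing a single $n$-cycle, and the conjugacy of $n$-cycles in $S_n$ --- are elementary.
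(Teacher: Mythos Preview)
Your proposal is correct and follows essentially the same route as the paper: invoke Theorem~\ref{th:6.6} to obtain the least non-trivial o-ideal in $G$ and the transitivity of $\rho\circ\lambda^{-1}$ on $I$, reduce to $\lambda=\mathrm{id}$ by relabeling, and then renumber $I$ along the resulting $n$-cycle so that $\rho(i)=i-1\ (\mathrm{mod}\,n)$. Your explicit relabeling lemma (the isomorphism $K^{\lambda,\rho}_I(G)_{ea}\cong K^{\sigma\lambda\tau^{-1},\,\sigma\rho\tau^{-1}}_I(G)_{ea}$) is a more rigorous formulation of what the paper simply asserts as ``without loss of generality \dots\ $\lambda$ is the identity map,'' and your observation that all $n$-cycles are conjugate in $S_n$ makes precise the paper's ``renumber $I$ following the $\rho$-cycle.''
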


\begin{proof}
In any case, $G$ has the least non-trivial o-ideal.

If $I$ is empty, the only bijection from $I$ to $I$ is the empty function. The kite pseudo effect algebra $K^{\emptyset,\emptyset}_0(G)$ is a two-element Boolean algebra. If $n=1,$ the kite $K^{\emptyset,\emptyset}_1(G)$ is isomorphic to the perfect symmetric pseudo effect algebra $\Gamma(\mathbb Z\lex G,(1,0)).$

For $n\ge 1,$ we can assume without loss of generality that $\lambda$ is the identity map and $\rho$ is a permutation of $I=\{0,1,\ldots,n-1\}.$ If $\rho$ is not cyclic, then there are $i,j \in I$ such that $j$ does not belongs to the orbit of $i,$ which means that $i$ and $j$ are not connected which contradicts Theorem \ref{th:6.6}. So $\rho$ must be cyclic. We can then renumber $I=\{0,1,\ldots,n-1\}$ following the $\rho$-cycle, so that $\rho(i) = i-1\ (\mathrm{mod}\, n),$ $i =0,1,\ldots,n-1.$
\end{proof}

Now we study kites $K^{\lambda,\rho}_I(G)_{ea}$ such that $I$ is infinite. We show that if the kite has the least non-zero normal ideal, then $I$ is at most countable.

\begin{lemma}\label{le:6.8}
Let $I$ be a set and $\lambda,\rho:I \to I$ be bijections and let $G$ be a directed po-group with \RDP$_1.$ If the kite pseudo effect algebra $K^{\lambda,\rho}_I(G)_{ea}$ has the least non-trivial normal ideal, then $I$ is at most countable.
\end{lemma}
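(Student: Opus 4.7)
The plan is to reduce this to a transitivity statement for a single cyclic permutation and then observe that orbits of a cyclic group are at most countable.

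First I would invoke Theorem \ref{th:6.6}: since $K^{\lambda,\rho}_I(G)_{ea}$ has the least non-trivial normal ideal, we know that for all $i,j\in I$ there exists an integer $m\ge 0$ with
\[
(\rho\circ\lambda^{-1})^m(i)=j \quad\text{or}\quad (\lambda\circ\rho^{-1})^m(i)=j.
\]
Set $\sigma:=\rho\circ\lambda^{-1}$, which is a bijection of $I$ onto itself, and observe that $\lambda\circ\rho^{-1}=\sigma^{-1}$. The condition therefore says exactly that for every pair $i,j\in I$ there is some $m\in\mathbb Z$ with $\sigma^m(i)=j$; equivalently, the cyclic group $\langle\sigma\rangle$ acts transitively on $I$.

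Next I would pick any $i_0\in I$ and note that $I$ coincides with the orbit $\{\sigma^m(i_0):m\in\mathbb Z\}$. Since this orbit is the image of $\mathbb Z$ under the map $m\mapsto \sigma^m(i_0)$, we have $|I|\le |\mathbb Z|=\aleph_0$, so $I$ is at most countable.

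There is no real obstacle here: the work is already done by Theorem \ref{th:6.6}, and the remaining step is the elementary observation that an orbit of a cyclic group action is countable. If desired, one could also record the two extreme cases for completeness: if the orbit is finite, then $I$ is finite (matching Theorem \ref{th:6.7}); if the orbit is infinite, then $I$ is countably infinite and $\sigma$ has infinite order on every element, which is the genuinely infinite kite case (for example, the wreath-product situation of Examples \ref{ex:3.8}--\ref{ex:3.10}).
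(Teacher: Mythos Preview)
Your proof is correct and follows essentially the same approach as the paper: both invoke Theorem~\ref{th:6.6} and then use that the orbit of any $i_0$ under $\sigma=\rho\circ\lambda^{-1}$ (together with its inverse $\lambda\circ\rho^{-1}$) is at most countable. The paper phrases this by contradiction---assuming $I$ uncountable and producing a $j$ outside the countable set $P(i)$---whereas you argue directly that $I$ equals a single $\langle\sigma\rangle$-orbit; the content is the same.
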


\begin{proof}
Suppose $I$ is uncountable and choose an element $i\in I.$ Consider the set $P(i)=\{(\rho\circ\lambda^{-1})^m(i):\ m \ge 0\} \cup \{(\lambda\circ \rho^{-1})^m(i): m\ge 0\}.$ The set $P(i)$ is at most countable, so there is $j \in I\setminus P(i).$ But $P(i)$ exhausts all finite paths alternating $\lambda$ and $\rho$ starting from $i.$ Then $i$ and $j$ are disconnected which contradicts Theorem \ref{th:6.6}. Hence, $I$ is at most countable.
\end{proof}

For example, the kite pseudo effect algebra $K_\mathbb Z^{\lambda,\rho}(\mathbb Z)_{ea},$ from Example \ref{ex:3.8}, where $\lambda(i)=i$ and $\rho(i)=i-1,$ $i\in I=\mathbb Z,$ satisfies the conditions of Theorem \ref{th:6.6}, so that it has the least non-trivial normal ideal; it consists of $\langle f_j \colon j \in I\rangle$ such that $f_j = e$ for all but finite number of indices $j\in I.$

Finally, we present kite pseudo effect algebras having the least non-trivial normal ideal when $I$ is infinite. In such a case by Lemma \ref{le:6.8}, $I$ has to be countable.

\begin{theorem}\label{th:6.9}
Let $|I|=\aleph_0,$ $\lambda,\rho: I \to I$ be bijections and $G$ be a directed po-group with \RDP$_1.$ If the kite pseudo effect algebra $K^{\lambda,\rho}_I(G)_{ea}$ has the least non-trivial normal ideal, then $K^{\lambda,\rho}_I(G)_{ea}$ is isomorphic to $K^{id,\rho}_\mathbb Z(G)_{ea},$ where $\rho(i)=i-1,$ $i \in \mathbb Z.$
\end{theorem}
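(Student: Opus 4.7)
My first move is to invoke Theorem~\ref{th:6.6} to convert the hypothesis into something concrete: $G$ has a least non-trivial o-ideal, and the pair $(\lambda,\rho)$ satisfies the connectivity condition that, for all $i,j\in I$, there is an $m\ge 0$ with $(\rho\circ\lambda^{-1})^m(i)=j$ or $(\lambda\circ\rho^{-1})^m(i)=j$. Writing $\sigma:=\rho\circ\lambda^{-1}$ and noting that $\lambda\circ\rho^{-1}=\sigma^{-1}$, this collapses to: $\sigma$ acts transitively on $I$. Consequently $I$ is a single $\sigma$-orbit, and since $|I|=\aleph_0$ is infinite, that orbit must be a two-sided infinite chain. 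Fixing $i_0\in I$, I define $\pi_1\colon I\to\mathbb Z$ by $\pi_1(\sigma^m(i_0)):=-m$. This $\pi_1$ is a bijection, and a direct calculation gives the intertwining identity $\pi_1\sigma\pi_1^{-1}=\rho'$, where $\rho'(n)=n-1$.

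The core step will then be a general \emph{relabeling principle for kites}: for any bijections $\pi_1,\pi_2\colon I\to\mathbb Z$, the coordinate-permuting map
\[
\Phi(\langle a_i^{-1}\colon i\in I\rangle):=\langle a_{\pi_1^{-1}(i)}^{-1}\colon i\in\mathbb Z\rangle,\qquad
\Phi(\langle f_j\colon j\in I\rangle):=\langle f_{\pi_2^{-1}(j)}\colon j\in\mathbb Z\rangle
\]
is an isomorphism $K^{\lambda,\rho}_I(G)_{ea}\to K^{\lambda',\rho'}_{\mathbb Z}(G)_{ea}$ of pseudo effect algebras exactly when $\lambda'=\pi_1\lambda\pi_2^{-1}$ and $\rho'=\pi_1\rho\pi_2^{-1}$. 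I would verify this by substituting $\Phi$ into the four defining cases (I)--(IV) of Theorem~\ref{th:3.4}: cases (I) and (IV) involve no bijections and are automatic, while cases (II) and (III) reduce, after chasing the subscripts, exactly to the two intertwining identities $\rho'\pi_2=\pi_1\rho$ and $\lambda'\pi_2=\pi_1\lambda$. Preservation of $0$, $1$, and of the two negations from Proposition~\ref{pr:3.2} is then automatic.

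To finish, I set $\pi_2:=\pi_1\circ\lambda$. Then $\pi_1\lambda\pi_2^{-1}=\mathrm{id}_{\mathbb Z}$ and $\pi_1\rho\pi_2^{-1}=\pi_1(\rho\lambda^{-1})\pi_1^{-1}=\pi_1\sigma\pi_1^{-1}=\rho'$, so the relabeling principle immediately yields $K^{\lambda,\rho}_I(G)_{ea}\cong K^{\mathrm{id},\rho'}_{\mathbb Z}(G)_{ea}$, as required. The main obstacle is not conceptual but bookkeeping: the verification of the relabeling principle must be done carefully in cases (II) and (III), where one tracks which subscripts $\rho^{-1}$ versus $(\rho')^{-1}$ apply on the old and new index sets. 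The substantive content of the proof is entirely the orbit-counting step that forces $\sigma$ to be a single $\mathbb Z$-cycle once $|I|=\aleph_0$ and $\sigma$ is transitive.
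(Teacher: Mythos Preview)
Your proposal is correct and follows essentially the same approach as the paper: invoke Theorem~\ref{th:6.6} to obtain transitivity of $\sigma=\rho\circ\lambda^{-1}$, then relabel along the single infinite orbit to land on $(\mathrm{id},\rho')$ with $\rho'(i)=i-1$. The only difference is presentational: where the paper writes ``after reordering, without loss of generality $\lambda=\mathrm{id}$'' and then renumbers along the $\rho$-cycle in two separate steps, you make the relabeling isomorphism explicit via the pair $(\pi_1,\pi_2)$ and carry out both steps at once, which is a cleaner and more rigorous execution of the same idea.
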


\begin{proof}
After reordering, without loss of generality, we can assume that $\lambda$ is the identity on $I.$ If $\rho$ is not cyclic, there would be two elements which should be disconnected which is impossible. Therefore, there is an element $i_0\in I$ such that the orbit $P(i_0):=\{\rho^m(i_0): m \in \mathbb Z\}=I.$ Hence, we can assume that $I=\mathbb Z,$ and $\rho(i)=i-1,$ $i \in \mathbb Z.$ Indeed, if we set $j_m=\rho^{-m}(i_0),$ $m\in \mathbb Z,$ then $\rho(j_m)=j_{m-1}$ and we have $\rho(i)=i-1,$ $i \in \mathbb Z.$
\end{proof}

\section{Conclusion}

In the paper we have extended the supply of interesting examples of pseudo effect algebras which are closely connected with po-groups. These pseudo algebras, called kite pseudo effect algebras, are ordinal sum of the product of the positive and the product of the negative cone of a given po-group, where operations depend on two bijections of an index set.  The algebras can be both commutative and non-commutative, and starting with an Abelian po-group it can happen that the corresponding kite pseudo effect algebra can be non-commutative. If we start with an $\ell$-group, the corresponding kite is always a pseudo MV-algebra which model non-commutative many-valued reasoning. The resulting algebra depends on the used bijections.

We have showed how kite pseudo effect algebras are connected with different types of the Riesz Decomposition Properties, Theorem \ref{th:4.1}. We have studied kite pseudo effect algebras also as perfect pseudo effect algebras, giving their representation, Theorem \ref{th:5.3}. Since pseudo effect algebras are partial algebras, it is not straightforward how to study some notions of universal algebras. Therefore, instead of subdirect irreducibility, we study the property when a kite pseudo effect algebra has the least non-trivial normal ideal, Section 6. It was shown that this property is closely connected with the existence of the least non-trivial o-ideal and with special behavior of the used bijections. Some representation theorems, Theorem \ref{th:6.7} and Theorem \ref{th:6.9}, are presented.
Finally some questions are formulated.

The paper has opened a new door in the realm of pseudo effect algebras and it has again showed importance of po-groups and $\ell$-groups for theory of quantum structures which can be useful also for modeling events of quantum measurements when commutativity is not a priori guaranteed.

\end{document}